\newtheorem{definition}{Definition}
\newtheorem{theorem}{Theorem}
\newcommand{\wjs}{\textcolor{black}}
\newcommand*\circled[1]{\tikz[baseline=(char.base)]{%
               \node[shape=circle,fill=white!10,draw,inner sep=0.5pt] (char) {#1};}}
\DeclareRobustCommand{\bxz}{\ifmmode \mathbxz
  \else
    \leavevmode\unskip\penalty9999 \hbox{}\nobreak\hfill
    \quad\hbox{\bxzsymbol}%
  \fi
}
\DeclareRobustCommand{\bxz}{\ifmmode \mathbxz
  \else
    \leavevmode\unskip\penalty9999 \hbox{}\nobreak\hfill
    \quad\hbox{\bxzsymbol}%
  \fi
}
\newcommand{\mathbxz}{\quad\hbox{\bxzsymbol}}
\providecommand{\bxzsymbol}{\fbox{\footnotesize B.X.Z}}
\providecommand{\fooname}{Foo}
\begin{document}

\title{FedServing: A Federated Prediction Serving Framework Based on Incentive Mechanism}

\author{\IEEEauthorblockN{Jiasi Weng${}^*$${}^\dag$\thanks{${}^*$Work was done when the author was a visiting student with City University of Hong Kong. It has been accepted for inclusion in IEEE INFOCOM 2021.}, Jian Weng${}^*$, Hongwei Huang${}^*$, Chengjun Cai${}^\dag$, and Cong Wang${}^\dag$}
\IEEEauthorblockA{${}^*$Jinan University;
${}^\dag$City University of Hong Kong
}}

\maketitle

\begin{abstract}
%
Data holders, such as mobile apps, hospitals and banks, are capable of training machine learning (ML) models and enjoy many intelligence services. To benefit more individuals lacking data and models, a convenient approach is needed which enables the trained models from various sources for prediction serving, but it has yet to truly take off considering three issues: (\emph{i}) incentivizing prediction truthfulness; (\emph{ii}) boosting prediction accuracy; (\emph{iii}) protecting model privacy.

We design FedServing, a federated prediction serving framework, achieving the three issues. First, we customize an incentive mechanism based on \emph{Bayesian game theory} which ensures that joining providers at a \emph{Bayesian  Nash  Equilibrium} will provide truthful (not meaningless) predictions. Second, working jointly with the incentive mechanism, we employ \emph{truth discovery algorithms} to aggregate truthful but possibly inaccurate predictions for boosting prediction accuracy. Third, providers can locally deploy their models and their predictions are securely aggregated inside TEEs. Attractively, \emph{our design supports popular prediction formats, including top-1 label, ranked labels and posterior probability.} Besides, blockchain is employed as a complementary component to enforce exchange fairness. By conducting extensive experiments, we validate the expected properties of our design. We also  empirically  demonstrate  that  FedServing reduces the risk of certain \emph{membership inference attack}.\looseness=-1
\looseness=-1
\end{abstract}

\begin{IEEEkeywords}
Prediction serving, Incentive mechanism, Privacy, Aggregation
\end{IEEEkeywords}

\section{Introduction}
Machine learning (ML) is revolutionizing our world and the global market for ML driven services is expected to reach \$$5,330$ million by 2024~\cite{ml2019market}.
%
Many data holders, such as mobile apps, hospitals and banks, are able to train models based on the available data they hold, and use the trained models to achieve functionality and business innovation~\cite{bello2016social}.
%
%
%
From another perspective, most individuals lacking data and power are incapable of training models, so that they hardly benefit from ML.
Even if an individual is in possession of a model, it still has the real-world demand to collaborate with others' models, demonstrated by an existing real-world case, \emph{i.e.}, two banks in North America collaborate to detect money laundering.
Obviously, due to privacy concerns, intellectual property issues or business competition, model owners are unwilling to share their trained models.
Thus, it is necessary to build a bridge which connects model owners who have no incentives of sharing models with individuals who need models.\looseness=-1

%
%
%
Building such a bridge inevitably needs to support three essential requirements as following:
(\emph{i}) providing sufficient \emph{incentives} to the model owners so that they are willing to contribute their models;
(\emph{ii}) enabling individual users of interest to enjoy as \emph{high-performance} as possible models;
(\emph{iii}) guaranteeing \emph{model privacy}, since models imply private information about their training data~\cite{song2017machine}.
However, there exists no work to realize such a bridge, so that it has yet to truly take off.\looseness=-1

While Machine-Learning-as-a-Service (MLaaS) platforms enable monetizing models for prediction serving on a pay-per-query basis, trained models have to reside on the untrusted servers, causing model privacy concerns.
%
Although earlier works present effective approaches~\cite{juvekar2018gazelle, mireshghallah2020shredder, tramer2018slalom} for protecting models against the untrusted servers, they still are not really satisfactory.
Specifically, cryptographic methods are computation-consuming and inefficient when handling large-sized models~\cite{juvekar2018gazelle}, but high-performance models usually are large.
%
%
%
Differential privacy based defenses would sacrifice prediction accuracy~\cite{mireshghallah2020shredder}.
Trusted hardware-enabled approaches are relatively practical but still have efficiency limitation.
It is due to that trusted hardwares are majorly restricted to CPUs, but running large models usually needs GPUs~\cite{tramer2018slalom}.
Motivated by our observations, our goal is to make model owners freely deploy their models without limits, collectively contribute their models to make profits and securely use models without privacy leakage concerns.
Towards the goal, we present a \emph{federated prediction serving framework, FedServing,} towards model owners from various sources in an open setting.
Our starting point is allowing model owners to deploy models at local devices and provide aggregated predictions for exchanging with monetary rewards.
Standing on top of it, we especially make efforts to design our solution for enforcing prediction accuracy, due to the following two-fold challenges:
%
%
%

%

\noindent\textbf{Challenge (\emph{i}): Strategic behaviors of model owners.}
Model owners (hereafter called as providers) are likely to be rational and selfish, so that they may be strategic to report meaningless predictions without effort.
In addition, ground truths with respect to given prediction queries are usually unknown, which makes the truthfulness of predictions hard to be verified.

\noindent\textbf{Challenge (\emph{ii}): Varying quality of models.}
%
%
While aggregating predictions (\emph{e.g.}, via majority voting or averaging) is a classic strategy for improving accuracy, they may be less effective in our case.
The issue of majority voting and averaging is that they assume prediction sources (\emph{i.e.}, models) are equally reliable.
Yet, the assumption cannot hold in our open setting.
It is due to that
(1) the qualities of models from various sources are varying and due to local deployment, there is no available authority enforcing the model quality upon answering prediction queries;
(2) even well-trained models are not always generalized well over the whole feature space of all prediction queries, so producing predictions are probably not always accurate.\looseness=-1

In light of the two challenging issues, the state-of-the-art solutions  usually resort to incentive mechanisms in conjunction with quality-aware aggregation algorithms, as learnt from the literature~\cite{peng2015pay, yang2017designing, jin2017theseus, sun2020towards, gong2018incentivizing, zhao2020pace}.
Unfortunately, previous work cannot be used to mitigate our challenging issues.
The main reason is that they fail to simultaneously handle categorical and continuous data covered by popular prediction outputs~\cite{mangai2010survey}.
As a concrete instance, the prediction outputs for each query of a sentiment analysis task can be \textbf{\emph{top-1 label}}, \emph{e.g.}, \textsf{[upset]}, \textbf{\emph{ranked labels}}, \emph{e.g.}, \textsf{[upset, scared, distressed, guilty]}, and \textbf{\emph{posterior probability}} for each label, \emph{e.g.}, $\left \langle 95.0\%, 2.0\%, 1.0\%, 2.0\% \right \rangle$, which are supported by Google Photos and Google Cloud Vision API, for example.\looseness=-1

\textbf{Our key design.} We customize a complementary mechanism by integrating an incentive design with ``truth-finding" algorithms.
Concretely, our mechanism
(1) uses \emph{Bayesian game theory} to model the honest and strategic behaviors of providers and ensures the existence of a \emph{Bayesian Nash Equilibrium}, where all providers will offer truthful (rather than meaningless) predictions for given prediction queries;
(2) employs \emph{truth discovery (TD) algorithms} to learn highly accurate predictions from the truthful (but possibly inaccurate) predictions to eliminate the effect of inaccurate predictions;
(3) allocates the providers with fair rewards in proportion to the truthfulness of their predictions;
(4) simultaneously handles prediction output formats including labels and the respective posterior probability.

Despite that models are locally deployed, privacy concerns still exist due to disclosing predictions of a model.
Concretely, a model's predictions can be exploited to infer if a data record was used to train a model, \emph{e.g.}, identifying if an individual was a patient at the hospital, known as membership inference attacks~\cite{shokri2017membership}.
%
%
%
To address the privacy concern, we leverage trusted execution environments (TEEs) to aggregate predictions from multiple providers, and only aggregated predictions are revealed to users~\cite{papernot2018marauder}.
Owing to the confidentiality and integrity provided by TEEs, a model's predictions are not revealed and aggregated predictions are correctly generated.
It is noteworthy that our proposed incentive mechanism also can benefit from the TEEs' integrity, since the procedure of evaluating the truthfulness of predictions from each model can be correctly executed, which further enforces fair rewards guided by the truthfulness.
Notably, we do not use privacy-preserving verifiable cryptography, considering that TEEs are relatively more performant.

Besides, we need to facilitate an open setting for model owners from various sources freely joining in FedServing.
But meanwhile, we also need a regulation complementary to our incentive mechanism for fulfilling the transparent process of money settlement and deterring providers' selfish behaviors, \textit{e.g.}, abortion, thereby achieving the fairness of money-prediction exchange among users and providers.
In light of the issues, we choose blockchain to facilitate the open setting and enforce the regulation.\looseness=-1

We note that FedServing can be extended to support the existing prediction serving systems and now we shed light on the service manner of our FedServing framework.
A prediction serving server can deploy a smart contract as a uniform query interface for charging users and as an entrance for participating providers.
When receiving the query and fees from a user, the server resorts to its off-chain TEEs-empowered component to collect predictions from participating providers who undertake the prediction task.
The TEE strategically aggregates predictions and submits aggregated predictions to the blockchain.
Finally, the user obtains the predictions and meanwhile the smart contract allocates the user's fees to the participating providers according to the truthfulness of their predictions.

In conclusion, this paper makes the main contributions as following:
\begin{itemize}
    \item \wjs{We propose a federated prediction serving framework empowered by the blockchain, providing an as accurate as possible prediction service with truthful contributions from various source models in an open setting.}
    \item \wjs{We customize an incentive mechanism for eliciting truthful contributions, by carefully applying a technique of \emph{peer prediction}~\cite{radanovic2014incentives} and fully respecting the formats of popular prediction outputs.}
    \item \wjs{We extend a widely-adopted truthful discovery algorithm to support our prediction setting, and make it jointly work with our designed incentive mechanism, finally producing as accurate as possible predictions.}
    \item \wjs{We implement our design and conduct extensive experiments in terms of the performance, validity and ability against a privacy attack. For reproducibility, our code is publicly available at \url{https://github.com/H-W-Huang/FedServing}.}
\end{itemize}
%



\section{Related Work}\label{sec:related}
\noindent \textbf{Prediction Serving System.}
Existing excellent systems~\cite{crankshaw2017clipper, lee2018pretzel} centralizedly manage models as well as deploy models for low-latency and high-throughput prediction serving, where models are off-the-shelf.
%
%
To enhance prediction accuracy, they generally support ensemble models which aggregate predictions from multiple models.
%
%

Different from them, our work focuses on the models from various sources for prediction serving in an open setting. More precisely, we consider \emph{how to incentivize model owners from various sources to provide truthful prediction services while respecting model privacy and ensuring prediction accuracy.}
To the end, we present a distributed framework which achieves the following three-fold components which are less considered by the existing systems~\cite{crankshaw2017clipper, lee2018pretzel}.
%
%
%
%
%

\noindent(\emph{i}) \textbf{Pricing mechanism.} We customize a pricing mechanism for compensating participating providers and incentivizing prediction truthfulness, instead of using an one-price-fits-all pricing structure, which still respects the pay-per-query business pattern of the current MLaaS platforms.

\noindent(\emph{ii}) \textbf{Quality-aware aggregation.} Considering that the model quality and the ground truths of prediction queries are unknown in our open setting, we use TD algorithms to aggregate predictions rather than simply averaging, thereby eliminating the effect of low-accuracy predictions.

\noindent(\emph{iii}) \textbf{Model and prediction protection.} We make models never leave local devices and multiple predictions are securely aggregated inside TEEs so that users only obtain aggregated predictions.
Due to local deployment, providers retain control over when and how their in-house models are used to make predictions, \textit{e.g.}, joining in an ensemble to produce predictions, thereby reducing the risks of privacy attacks~\cite{shokri2017membership}.
%

%
%
%

%
%
%


\noindent \textbf{Incentive Mechanism.}
%
Prior incentive mechanisms~\cite{zhang2012reputation, luo2014profit, jin2016enabling, jin2016inception, jin2017centurion, jin2017theseus, gong2018incentivizing, jin2019dynamic, zhang2015incentivize, wang2017melody, zhang2015truthful, yang2012crowdsourcing, han2016posted, peng2015pay, zhao2014crowdsource, zhang2014free, chen2016incentivizing, huang2019crowdsourcing, zhang2016incentive, yang2017designing, sun2020towards, zhao2020pace} are designed for
stimulating participation by compensating workers' costs with monetary rewards, and implementing the economic properties, such as platform profit maximization, individual rationality and budget feasibility, which greatly promote the development of crowdsourcing.
%
The incentive mechanisms generally resort to game-theoretic methods,
such as reverse auction~\cite{zhang2014free, jin2016enabling, jin2016inception, zhang2015incentivize, wang2017melody}, double auction~\cite{jin2017centurion, zhang2016incentive} and all-pay auction~\cite{luo2014profit}, or other game theory~\cite{jin2017theseus, gong2018incentivizing, huang2019crowdsourcing}.
%
With the game-theoretic analysis, they consider the strategic behaviors of workers and investigate how to encourage workers to behave truthfully.\looseness=-1

In this paper, we aim to stimulate the truthfulness of collective predictions considering providers' strategic behaviors of providing meaningless predictions, thereby achieving quality control.
Existing mechanisms do not solve our problem, since the following four \emph{requirements} cannot be simultaneously satisfied:\looseness=-1

\noindent(\emph{i}) \textbf{Incentivizing truthfulness.}
Most mechanisms~\cite{jin2016enabling, jin2016inception, zhang2015incentivize, wang2017melody} focus on incentivizing workers to reveal their costs truthfully.
A few excellent mechanisms like~\cite{jin2017theseus, gong2018incentivizing} incentivize the truthfulness of crowd data as this paper, but they are unsatisfactory to us due to that (\emph{ii}) below cannot be supported.

\noindent(\emph{ii}) \textbf{Simultaneously handling categorical and continuous data.}
Theseus~\cite{jin2017theseus} proposes a truthful mechanism for quality and efforts elicitation while focusing on continuous sensing data.
\cite{gong2018incentivizing} creatively studies the joint elicitation of quality, efforts and data while focusing on discrete data (more precisely, binary data).
Their techniques do not solve our problem, since we simultaneously consider labels and confidence values which are categorical data and continuous data, respectively.

\noindent(\emph{iii}) \textbf{No reliance on prior knowledge.}
Prior arts~\cite{jin2017centurion, yang2017designing} assume the prior knowledge about workers' reliability or reputation that helps allocate rewards. We do not use such prior knowledge, since a provider's predictions for historical tasks are considered irrelevant to the current task\footnote{We note that a prediction task usually is requested with batch queries, where the queries belonging to the same task are relevant.}.

\noindent(\emph{iv}) \textbf{Jointly addressing incentive and quality concerns.}
Most mechanisms for incentivizing truthfulness do not jointly work with TD, except to~\cite{jin2017theseus, yang2017designing, peng2015pay, zhao2020pace, sun2020towards}.
These works can be deployed with TD, but they still are unsatisfactory to us:
Theseus~\cite{jin2017theseus}, \cite{yang2017designing} and \cite{zhao2020pace} majorly consider the continuous data stream;
\cite{peng2015pay} does not focus on workers' strategic behaviors; \cite{sun2020towards} cares about binary answers and assumes that most workers are reliable.
%

%
%

\section{System Overview}\label{sec:overview}
In this section, we present our FedServing framework. It begins with the system model, and then figures out the threat assumptions and design goals.
\begin{figure}[h]
\small
\vspace{-5pt}
    \centering
    \includegraphics[width=0.7\columnwidth]{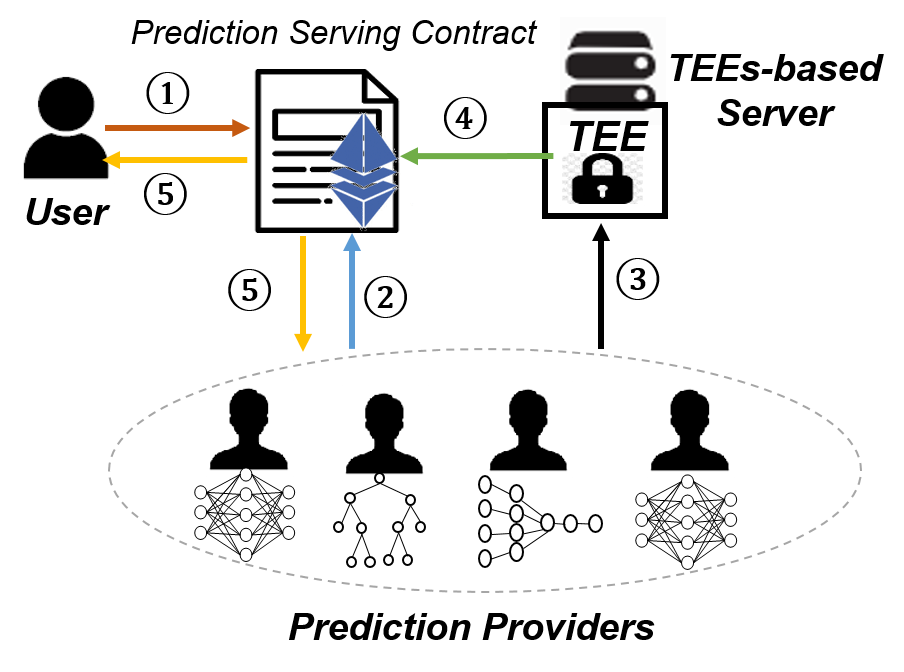}
    \caption{Overview of the FedServing framework.}
    \label{fig:overview}
    \vspace{-10pt}
\end{figure}
\subsection{System Model}\label{system_model}
At the high level, our FedServing consists of four entities as shown in Fig.~\ref{fig:overview}: \textit{prediction providers, user, smart contract} and \textit{TEEs-based server}.
Specifically, the prediction providers who own various ML models monetize their prediction query services on the blockchain (\textit{e.g.}, Ethereum).
They could publicize non-private model profiles like service APIs for user accessing their models.
User is able to browse model profiles on the blockchain, and query prediction services via a smart contract, named as \textit{prediction serving contract} (PS contract).
PS contract aims at receiving the user's query request, relaying the request, receiving aggregated predictions and achieving the fair payment finalization.
As the intermediator between the prediction providers and the PS contract, the TEEs-based server is responsible for strategically aggregating predictions sent by multiple providers, and calculating accuracy-aware scores used to guide allocating rewards.
%
%
The basic workflow in Fig.~\ref{fig:overview} is described as following:\looseness=-1
\begin{enumerate}[label=\protect\circled{\arabic*}]
    \item User sends a transaction which contains a description about her requested task, \textit{e.g.}, a sentiment analysis task, and makes a deposit for payment to the PS contract.
    Note that the task's input data, \textit{e.g.}, text files, can be securely stored in an accessible system like IPFS, and then be securely authorized to participating providers.
    %
    %

    %
    \item Providers participate in the task by submitting a deposit to the PS contract for potential penalty, \emph{e.g.}, punishing abortion.
    Here, we omit the phase that they can authentically obtain the task's input data from IPFS.
    \item Participating providers evaluate local models on the input data, and lastly submit predictions to the TEE via an authenticated communication channel.
    \item The TEE strategically aggregates predictions from multiple participating providers and compute accuracy-aware scores for each provider. After that, the aggregated predictions are correctly encrypted using the user's public key and submitted to the blockchain.
    \item User retrieves and decrypts the aggregated predictions using her private key, and meanwhile, her deposit is allocated to the participating providers according to the respective accuracy-aware scores.\looseness=-1
\end{enumerate}
%
%
%
%

%
%
%

\subsection{Threat Model and Assumptions}\label{sec:assumption}
\noindent\textbf{Prediction Provider.} We consider that prediction providers are rational and self-interested.
They may act to maximize their profits by submitting arbitrary predictions.
%
%
%
The providers answering certain query are named as participating providers and assumed not to collude with each others.
In addition, we assume that the input data received by participating providers are benign; perturbed input data known as \textit{adversarial examples}~\cite{szegedy2013intriguing} are out of our consideration.

\noindent \textbf{TEEs.}
%
We trust that TEEs, \textit{e.g.}, Intel Software Guard Extensions (SGX), can securely execute specific programs against external observation and manipulation, \textit{i.e.}, ensuring confidentiality and integrity.
We note that side-channel attacks and rollback attacks on TEEs are out of the scope of this paper like prior TEEs-empowered work~\cite{ hunt2018ryoan}, owing to many off-the-shelf defence mechanisms~\cite{ahmad2018obliviate, kaptchuk2019giving}.
We rely on the authenticated communication channels built between a TEE and a remote party, \textit{e.g.}, Intel SGX's Enhanced Privacy ID (EPID) remote attestation protocol.\looseness=-1

\noindent \textbf{Blockchain.} We trust the blockchain for integrity and availability.
Smart contract autonomously and faithfully executes defined functions, \textit{e.g.}, correctly locking deposits and settling rewards, which is assumed not vulnerable to software bugs.

\noindent\emph{Remarks.}
We aware that FedSeving can suffer from Sybil attacks~\cite{douceur2002sybil}, where a prediction provider may maliciously use multiple fake accounts to join in certain task.
For demoralizing Sybil attacks, a widely adopted solution is to increase the attack cost like solving proof-of-work puzzles and making deposits.
In this paper, we require each participating provider to make a deposit before undertaking a task.

\subsection{Design Goals}\label{sec:goals}
\noindent\textbf{Truthfulness and accuracy.}
It means that user can obtain aggregated predictions with truthfulness and accuracy guarantees.
Specifically, each participating provider provides truthful (but possibly inaccurate) predictions, and meanwhile, the truth discovery algorithm is correctly conducted on the provided truthful predictions to produce truths, \textit{i.e.}, aggregated predictions, which are regarded accurate enough.
%

\noindent\textbf{Fairness.}
It includes the fairness of reward allocation and the fairness of money-prediction exchange.
First, each participating provider in a task gets a fair reward guided by a \textit{strictly proper score} which is computed based on the truthfulness of their predictions.
A comparatively truthful prediction leads to a higher score, and the prediction's provider obtains comparatively more rewards.
%
%
%
Second, all participating providers receive rewards \textit{iff} the user obtains the final predictions.
\begin{table*}[htbp]
\vspace{-10pt}
 \centering
 \caption{Examples of prediction formats}\label{tab:format}
 \begin{tabular}{lccc}
  \toprule
  Format & Model1 & Model2 & Model3 \\
  \midrule
  Abstract & $\left \langle 0, 1, 0, 0 \right \rangle$
       & $\left \langle 1, 0, 0, 0 \right \rangle$
       & $\left \langle 1, 0, 0, 0 \right \rangle$   \\
  Rank & $\left \langle 2, 4, 1, 3 \right \rangle$
       & $\left \langle 4, 2, 1, 3 \right \rangle$
       & $\left \langle 4, 2, 3, 1 \right \rangle$   \\
  Measurement & $\left \langle 2.0\%, 49.0\%, 1.0\%, 48.0\% \right \rangle$
       & $\left \langle 92.0\%, 2.0\%, 1.0\%, 5.0\% \right \rangle$
       & $\left \langle 93.0\%, 2.0\%, 3.0\%, 2.0\% \right \rangle$  \\
  \bottomrule
 \end{tabular}
 \\
 \vspace{2pt}
\vspace{-20pt}
\end{table*}
\section{Design of Prediction Aggregation}\label{sec:combination}
Considering that our FedServing is built in an open setting, participating models might produce inaccurate predictions.
The reasons include that (\emph{i}) varying quality models can freely participate in FedServing, and meanwhile, there is no available authority enforcing the quality of participating models;
 (\emph{ii}) trained models are not always generalized well over the whole feature space of every prediction task~\cite{kuncheva2002switching}.

\wjs{In light of this issue, we study the lessons from the earlier works~\cite{yang2017designing, jin2017theseus, sun2020towards, li2014resolving, su2014generalized} and leverage TD algorithms~\cite{li2014resolving} to aggregate predictions, so as to learn as accurate predictions as possible from varying quality models in absence of ground truth.}
We support three common prediction formats in practice.
To the best of our knowledge, there is no existing scheme dealing with the issue as this paper.
The previous work~\cite{su2014generalized} is similar to our design of prediction aggregation, but it focuses on one single format, \emph{i.e.}, probability vector.
We especially consider other popular prediction outputs, \emph{e.g.}, ranked label list, used in Google Photos.

%
For ease of presentation, we begin with an instance of prediction task.
Then, we elaborate three prediction formats and demonstrate how to aggregate them.

\noindent\textbf{Instance Description.} Suppose that a social psychologist has a sentiment analysis task for a set of consulting letters from anonymous citizens. She needs to label the set of consulting letters with the emotion states for studying social projection. With the task, she can query the PS contract in our FedServing:
\textsf{what are the emotion states for each consulting letter, distressed, upset, guilty or scared?}

\noindent\textbf{Prediction Formats.}
In the above instance, we introduce three popular prediction output formats~\cite{mangai2010survey}:
(1) \textbf{Abstract}: a top-1 class label,  \textit{e.g.}, \textsf{'upset'},
(2) \textbf{Rank}: a ranked list of labels,  \textit{e.g.}, \textsf{[upset, scared, distressed, guilty]}, and
(3) \textbf{Measurement}: a probability vector for possible class labels, \textit{e.g.}, $\left \langle 2.0\%, 95.0\%, 1.0\%, 2.0\% \right \rangle$ for \textsf{[distressed, upset, guilty, scared]} (their sum is $100\%$).

Apparently, the measurement output contains the most detailed prediction information while the abstract output contains less information.
%
%
Note that here we mainly discuss classification tasks, but our method can be easily extended to regression tasks which are associated with real-valued predictions.

\noindent\textbf{Prediction Aggregation.}
We now introduce the algorithm to aggregate predictions adapted to the three formats.
Specifically, in order to fluently run the truth discovery algorithm as shown in Algorithm~\ref{alg:truthdiscovery}, we carefully transform the later two formats into continuous data vectors.
For ease of explanation, we suppose that there are three models predicting a given consulting letter with the corresponding label list \textsf{[distressed, upset, guilty, scared]}.
Their predictions with respect to the three formats are demonstrated in TABLE~\ref{tab:format}.

We now explain how we uniformly represent the three-format predictions by using continuous data vectors.
For the abstract format, the three models separately produce labels \textsf{'upset', 'distressed'} and \textsf{'distressed'}.
We transform them into the corresponding $0/1$ value vectors, where the index with value $1$ is the most possible label, as shown in the abstract row of TABLE~\ref{tab:format}.
For the rank format, the three models provide the ranked lists of possible labels as presented in TABLE~\ref{tab:rank}.
For example, a ranked list \textsf{[upset, scared, distressed, guilty]} is given by the first model.
We set ranked integer values to each ranking level. A largest integer represents the highest ranking level while a smallest integer represents the lowest one.
With this representation rule, the ranked lists in TABLE~\ref{tab:rank} are transformed into the vectors with integer values in the rank row of TABLE~\ref{tab:format}.
Last, the probability vectors in the measurement format are presented without change.
Hereafter, we call the vector values as \emph{confidence values}.

\begin{table}[htbp]
 \centering
 \caption{Examples of ranked lists.}\label{tab:rank}
 \begin{tabular}{cccc}
  \toprule
  Value & Model1 & Model2 & Model3 \\
  \midrule
  4 & \textsf{upset} & \textsf{distressed} & \textsf{distressed} \\
  3 & \textsf{scared} & \textsf{scared} & \textsf{guilty} \\
  2 & \textsf{distressed} & \textsf{upset} & \textsf{upset} \\
  1 & \textsf{guilty} & \textsf{guilty} & \textsf{scared} \\
  \bottomrule
  \vspace{-10pt}
 \end{tabular}

\end{table}

After the uniform representation, multiple predictions for the set of consulting letters in each format will be aggregated via Algorithm~\ref{alg:truthdiscovery} including two steps.
Specifically, we suppose that there are multiple predictions from $m$ $(m \geq 3)$ providers for $n$ consulting letters.
Each prediction is a $c$-length vector containing the confidence values for each class label, where $c$ is the number of given possible class labels.
%
%
They are represented as $\{I^j_i\}_{i=1,j=1}^{m,n}$, where $I^j_i$ is a continuous data vector $\mathbf{v}^j_i=$($v^j_{i1}, ..., v^j_{ic}$).
Now, with Algorithm~\ref{alg:truthdiscovery}, we iteratively estimate the truths on $\{I^j_i\}_{i=1,j=1}^{m,n}$ and update $m$ providers' weights until convergence.
The algorithm finally outputs the truths as the aggregated predictions $\{O^{j(\varepsilon)}\}_{j=1}^n$ with respect to each consulting letter.

\begin{algorithm}[h]
\scriptsize
\caption{Truth discovery}\label{alg:truthdiscovery}
\begin{algorithmic}[1]
\Require
provider predictions $\{I^j_i\}_{i=1,j=1}^{m,n}$
\Ensure
truth predictions $\{O^{j(\varepsilon)}\}_{j=1}^n$

\State Initialize $r=1$ and weights $\{w^{(r)}_i=1\}_{i=1,...,m}$.
\Repeat
%
\For{each $j \in [1,n]$}
\State $O^{j(r+1)} \leftarrow \frac{\sum_{i=1}^m w^{(r)}_i I^j_i}{\sum_{i=1}^m w^{(r)}_i}$ \hspace{86pt}$(1)$
\EndFor
\vspace{1pt}
\For{each $i \in [1,m]$}
%
%
\vspace{1pt}
\State $w^{(r+1)}_i \leftarrow$ $-$log$(\frac{\sum_{j=1}^n f_{loss}(O^{j(r+1)}, I^j_i)}{\sum_{k=1}^m \sum_{j=1}^n f_{loss}(O^{j(r+1)}, I^j_k)})$ \hspace{8pt} $(2)$
\EndFor
\State $r=r+1$
\Until $r \leq \varepsilon $
\\
\Return $\{O^{j(\varepsilon)}\}_{j=1}^n$
\end{algorithmic}
\end{algorithm}

Initially, we set each provider's weight with $1$ and denote an iteration threshold $\varepsilon$.
Then, with fixed weights, $m$ providers' predictions are aggregated via the weighted mean method (Step (1)).
During the iterative computation, the aggregated predictions are closer to that of the providers having higher weights.
With the aggregated predictions, each provider's weight is updated based on the distances between his predictions and the aggregated predictions with respect to $n$ consulting letters (Step (2)).
The provider whose predictions are closer to the aggregated predictions will be assigned with a higher weight.
Here, the loss function $f_{loss}(\cdot)$ is used to characterize the distance and specifically, we use the normalized squared loss function.
Step (1) and (2) are iteratively computed until $r$  reaches pre-defined threshold $\varepsilon$.

%
%
%

\section{Design of Pricing Mechanism}\label{sec:pricing}
The previous section introduces the process of aggregating predictions with the aim to filter out less accurate predictions.
Yet, the accuracy of aggregated predictions still cannot be guaranteed if a majority of self-interested providers offer meaningless predictions.
In order to motivate the self-interested providers to provide truthful predictions, we jointly design our pricing mechanism by employing the Bayesian game theory.
Notably, predictions contain categorical and continuous data which will be simultaneously handled.\looseness=-1

This section begins with the setting definitions and design objectives, and then presents the pricing mechanism formulation and an approximate solution.
To the end, an analysis for the proposed pricing mechanism is elaborated.
%

\subsection{Mechanism Setting}
We use the game theory method to model the strategic behaviors of participating providers inspired by the works~\cite{jin2017theseus, gong2018incentivizing}.
Concretely, we model participating providers $P=\{i, ..., m\}$ playing a \textit{non-cooperative game}, where each of them independently gives a private prediction for each query requested by certain user.
Note that a requested task can include multiple queries, \textit{e.g.}, labeling multiple consulting letters.
%

In the game, participating providers behave as utility maximizers.
They behave strategically by evaluating their expected utility.
Specifically, they will not participate if the expected utility is negative, and otherwise, they offer predictions via a specific strategy that maximizes the expected utility.
In general, the evaluation needs some technical assumptions~\cite{radanovic2014incentives}.
We assume that participating providers undertaking the same task have a common prior belief, and meanwhile, they use the same belief updating procedure, \textit{i.e.}, Bayes' rule.

A provider's behavior is described by \textit{strategy}.
A strategy is denoted by $s=(\mathbf{l}, \mathbf{v})$ meaning giving a prediction for a query , or $\perp$ meaning abort.
Herein, $\mathbf{l}$ is a list of claimed possible class labels and each label in $\mathbf{l}$ is from discrete set $\Omega$;
$\mathbf{v}$ is the corresponding posterior probability values which are drawn from probability density distributions $\Psi$.
Thus, the strategy space is $\{(\Omega, \Psi)\} \cup \{\perp\}$.
Then, the participating providers' strategy profile is $\mathbf{S}=(s_1, ..., s_m)$, if we suppose that there are $m$ participating providers.

Next, we continue to formulate the provider model, the user model and a Bayesian Nash Equilibrium for providers.

\noindent\textbf{Provider Model.}
Within the defined game, a provider's payoff depends on his own strategy with regard to other providers' strategies.
Specifically, given a payment function $p(\cdot)$, a cost function $c(\cdot)$ and deposit $d_0$, we define any provider's utility $u_i(\mathbf{S})$, $i\in P$ in a game with a strategy profile $\mathbf{S}$ as following:
$$
u_i(\mathbf{S}) = p_i(\mathbf{S}) - c(s_i) - d_0.
$$
Next, any provider can evaluate the expected utility:
$$
\mathbb{E}_{\mathbf{S}_{-s_i}}[u_i(s_i,\mathbf{S}_{-s_i})] =
\mathbb{E}_{\mathbf{S}_{-s_i}}[p_i(s_i, \mathbf{S}_{-s_i})] - c(s_i) - d_0,
$$
where $\mathbf{S}_{-s_i}$ is the strategy profile excluding $s_i$.
Note that a participating provider's deposit for $n$ queries is $d=n\times d_0$.

\noindent\textbf{User Model.} A user's objective is to obtain the aggregated predictions whose accuracy is as close as possible to the truth accuracy.
To exchange the aggregated predictions $\{\mathbf{v}^j\}_{j=1, ..., n}$ of $n$ queries from $m$ participating providers, she makes amount of deposits, namely budget $B$, on the blockchain.
Assume that the market publicizes budget curves relative to the number of employed providers via market survey.
%
With the budget curves, the user deposits a budget level that enables soliciting certain number of prediction providers.\looseness=-1

\noindent\textbf{Bayesian Nash Equilibrium.} A strategy profile $\mathbf{S^*}$ is denoted as a Bayesian Nash Equilibrium (BNE) in the defined game, if no provider $i\in P$ can increase her expected utility by changing the current strategy $s^*_i$ with regard to other providers' strategies $\mathbf{S}^*_{-s_i}$:
$$
\mathbb{E}_{\mathbf{S}^*_{-s_i}}[u_i(s^*_i,\mathbf{S}^*_{-s_i})] \geq
\mathbb{E}_{\mathbf{S}^*_{-s_i}}[u_i(s_i,\mathbf{S}^*_{-s_i})].
$$
\noindent At the BNE, our mechanism aims to achieve several design objectives in Section~\ref{sec:objectives}.

\subsection{Design Objectives}\label{sec:objectives}
With the strategy $\mathbf{S^*}$ at the BNE, we state three design objectives below.

\begin{definition}
(\textbf{Truthfulness}) An aggregated prediction for a query is truthful if and only if (i) the aggregation computation is correctly executed, and meanwhile, (ii) every participating provider $i\in P$ at BNE $\mathbf{S}^*$ provides a prediction $s^*_i=(\mathbf{l}_i, \mathbf{v}_i)$ satisfying the following condition:
$$
\mathbf{l}_i=\mathbf{l}^p\wedge \mathbf{D}_{KL}(\mathbf{v}^{T}||\mathbf{v}_i) \leq \theta.
$$
\end{definition}
\noindent Here, vector $\mathbf{l}^p$ contains the public possible class labels, \textit{e.g.}, \textsf{[distressed, upset, guilty, scared]} in Section~\ref{sec:combination}. $\mathbf{v}^{T}$ is the true posterior probability for $\mathbf{l}^p$.
$\mathbf{D}_{KL}(||)$ is the Kullback-Leibler (KL) divergence function.
$\mathbf{D}_{KL}(\mathbf{v}^{T}||\mathbf{v}_i)$ measures the information lost using $\mathbf{v}_i$ to approximate $\mathbf{v}^{T}$.
%
%
Clearly, condition (\emph{i}) can be guaranteed by leveraging TEEs.
Next, we design a pricing mechanism to meet condition (\emph{ii}), that is, every provider has no motivation to provide a prediction which deviates from the truthful labels and the corresponding truthful posterior probability.
However, $\mathbf{v}^{T}$ is unknown in our setting. Our designed pricing mechanism will take it into consideration.

\begin{definition}
(\textbf{Individual Rationality})
A pricing mechanism satisfies individual rationality (IR) iff every participating provider $i\in P$ at the BNE has non-negative expected utility:
$$
\mathbb{E}_{\mathbf{S}^*_{-s_i}}[u_i(s^*_i,\mathbf{S}^*_{-s_i})] \geq 0.
$$
\end{definition}

\begin{definition}
(\textbf{Budget Feasibility}) A pricing mechanism satisfies budget feasibility (BF) iff the total payment allocated to the participating providers $i\in P$ at the BNE is not more than a user's given budget for every query:
$$
\mathbb{E}_{\mathbf{S}^*}[\sum_{i=1}^mp_i(\mathbf{S}^*)]\leq \frac{B}{n},
$$
\noindent where $m$ is the number of providers while $n$ is the number of queries.
\end{definition}

\subsection{Pricing Mechanism Formulation}
We are now ready to formulate the optimization problem of designing our pricing mechanism for participants' predictions (called as PPP), \textit{i.e.},
$$
\underset{p(\cdot)}{\textbf{max}} \quad \sum_{i=1}^m\mathbf{Pr}(\mathbf{D}_{KL}(\mathbf{v}^T||\mathbf{v}_i)\leq \theta)
$$
$$
\hspace{-10pt}\text{s.t.}\hspace{5pt}\mathbb{E}_{\mathbf{S}^*_{-s_i}}[u_i(s^*_i,\mathbf{S}^*_{-s_i})] \geq 0
$$
$$
\hspace{5pt} \mathbb{E}_{\mathbf{S}^*}[\sum_{i=1}^mp_i(\mathbf{S}^*)]\leq \frac{B}{n}.
$$
As elaborated, given a set of participating providers $P=\{1,...,m\}$, $n$ queries and budget $B$, we aim to customize a payment function $p(\cdot)$ which satisfies both constraints of IR and BF, as well as maximizes the objective function, that is, the overall probability of the KL divergence between every provider's prediction at BNE $\mathbf{S^*}$ and the true prediction which is less than given threshold $\theta$.

Solving PPP optimization problem will effectively minimize the loss between the accuracy of the aggregated predictions via truth discovery and the truth accuracy, which is the user's objective.
First, given $n$ queries, $\sum_{j=1}^n\sum_{i=1}^m\mathbf{D}_{KL}(\mathbf{v}^{Tj}||\mathbf{v}_i^j)$ is apparently minimized, if PPP optimization problem is solved for every query.
Next, we can achieve that the result accuracy via truth discovery is as close as possible to the truth accuracy due to
$\sum_{j=1}^n\sum_{i=1}^m\mathbf{Pr}(\mathbf{D}_{KL}(\mathbf{v}^{Tj}||\mathbf{v}_i^j)\leq \theta) \geq \sum_{j=1}^n\mathbf{Pr}(\mathbf{D}_{KL}(\mathbf{v}^{Tj}||\mathbf{v}^j) \leq \theta)$.
The conclusion is according to the following derivation:
\vspace{-5pt}
$$
\sum_{i=1}^m\sum_{j=1}^n\mathbf{D}_{KL}(\mathbf{v}^{Tj}||\mathbf{v}_i^j)
\geq
\frac{\sum_{i=1}^mw_i(\sum_{j=1}^n\mathbf{D}_{KL}(\mathbf{v}^{Tj}||\mathbf{v}_i^j))}{\sum_{i=1}^mw_i}
$$
$$
=\sum_{j=1}^n\frac{\sum_{i=1}^mw_i\mathbf{D}_{KL}(\mathbf{v}^{Tj}||\mathbf{v}_i^j)}{\sum_{i=1}^mw_i}
\geq\sum_{j=1}^n\mathbf{D}_{KL}(\mathbf{v}^{Tj}||\frac{\sum_{i=1}^mw_i\mathbf{v}_i^j}{\sum_{i=1}^mw_i})
$$
$$
\hspace{-165pt}=\sum_{j=1}^n\mathbf{D}_{KL}(\mathbf{v}^{Tj}||\mathbf{v}^j).
$$

However, solving PPP optimization problem is hard and the ground truth is unavailable, namely $\mathbf{v}^T$.
Hence, we approximately solve it by applying the idea of divergence-based Bayesian Truth Serum (BTS) method~\cite{radanovic2014incentives}.
The main idea of the divergence-based BTS method is rewarding a player based on the divergence between her reports and a randomly selected counterpart' reports, when there is no ground truth for verification.
It is an effective approach to incentivize report truthfulness and control report quality~\cite{radanovic2014incentives}.
\wjs{We inherit such desirable properties from the divergence-based BTS method, and in the meantime, we handle both discrete data and continuous data, \emph{i.e.}, label and posterior probability, which is different from prior works~\cite{jin2017theseus, gong2018incentivizing} considering either continuous data or discrete data.}

\wjs{Derived from the divergence-based BTS method, we denote our payment function.}
It rewards a participating provider $i$ based on its strategy $s_i$ and a randomly selected provider $r$'s $s_r$ by calculating two scores.
The payment function is
$
p_{i}(s_i) = \alpha_i \times (score_{\text{I}i} + score_{\text{P}i} + 1)^2
$, where $\alpha_i > 0$.
The two scores are denoted accordingly as following:

\noindent (1) $score_{\text{I}i} = score_{\text{I}}(s_i, s_r)$ measures a \textit{penalty} value if $s_i$ reports the same labels with $s_r$, but the corresponding posterior probability disagrees with each others.
    $$
    score_{\text{I}}(s_i, s_r) = -\mathbb{I}_{\mathbf{l}_i=\mathbf{l}_r\wedge \mathbf{D}_{KL}(\mathbf{v}_i||\mathbf{v}_r)>\theta}
    $$
Herein, $\mathbb{I}_{a}$ is an indicator. Its value is $1$, if condition $a$ is valid; otherwise, its value is $0$.

\noindent (2)  $score_{\text{P}i} = score_{\text{P}}(s_i, s_r)$ measures a \textit{reward} value if $s_i$'s posterior probability fits close to the distribution of the class labels provided by $s_r$.
\begin{align*}
score_{\text{P}}(s_i, s_r)=\frac{1}{c}\sum_{k=1}^c[2 - (1-\mathbf{v}_{i}(l_{rk}))^2
-\sum_{l_r\in \Omega/\{l_{rk}\}} \mathbf{v}_{i}(l_r)^2]
\end{align*}
Herein, $c$ is the number of possible class labels;
$\mathbf{v}_i(l)$ means the posterior probability for label $l$ and
$(\mathbf{v}_i(l_{1}), ..., \mathbf{v}_i(l_{c}))$ constitute $\mathbf{v}_i$ with constraint $\sum_{k=1}^c\mathbf{v}_i(l_{k})=1$.
Concretely, $\mathbf{v}_{i}(l_{rk})=\mathbf{Pr}(l_{rk}|l_{ik})$ represents $i$'s the posterior probability for label $l_{rk} \in \Omega$.
If $\mathbf{v}_{i}(l_{rk})=\mathbf{Pr}(l_{rk}|l_{ik}=l_{rk})$, the score value is maximized, being equal to $2$.
If $\mathbf{v}_{i}(l_{rk})=\mathbf{Pr}(l_{rk}|l_{ik}\neq l_{rk})$, the score value is minimized, being equal to $0$.\looseness=-1

With the definitions above, the value of $(score_{\text{I}i} + score_{\text{P}i} + 1)$ falls in the range $[0,3]$.
Also, it is worth noting that the scoring rule consisting of the two scores has been proved \textit{strictly Bayes-Nash incentive-compatible} relying on  \textit{stochastic relevance} in~\cite{radanovic2014incentives}.
It means that a truthful prediction is always configured with a higher score compared to a untruthful prediction so as to achieve the goal of fairness (refer it to Section~\ref{sec:goals}).

Considering the potentiality of a participating provider aborting, we revise our payment function. If provider $i$ does not abort, her deposit $d_0$ should be refunded, that is, $p_i(s_i)=p_i(s_i) + d_0$. Otherwise, her deposit $d_0$ will be forfeited.

\subsection{Analysis}
In this section, we proceed to analyze how to achieve the design objectives in Section~\ref{sec:objectives} by using the presented pricing function as an approximately solution.

To begin with, we quantify the cost function with respect to different participating providers, which is useful to estimate the providers' expected utility.
For simplicity, we assume that participating providers' costs are known, which refers to the \emph{complete information} scenario. Their costs derive from the identical two cost parameters $c_1>0$ and $c_2>0$ which are far smaller than a user's budget $B$.
We assume that the cost of generating a product linearly increases with the product's quality.
Recall that we measure the truthfulness of a prediction via the divergence-based BTS method due to the lack of ground truths.
Specifically, using two scores measures a prediction truthfulness.
Thus, we next naturally regard the two scores as the quality metric to calculate the corresponding cost of every strategy $s_i$. That is, $c(s_i)=c_1\cdot (score_{\text{I}i}+score_{\text{P}i}+1)+c_2$. It is noteworthy that the cost monotonically increases with score $(score_{\text{I}i}+score_{\text{P}i}+1)$ increasing.

We are now ready to analyze that with our pricing mechanism, there exists a BNE achieving our design objectives via parameter constraints.
Specifically, we set constraint conditions on parameter $\alpha_i$ considering the design objectives of individual rationality and budget feasibility, based on which we find a BNE, where all participants adopt the strategy of offering truthful predictions.
Below, we demonstrate and prove this finding by Theorem~\ref{the:pricing}.
\begin{theorem}\label{the:pricing}
In the non-cooperative game, there exists a BNE $\mathbf{S}^*=(s_1^*, ..., s_m^*)$, where every participating provider $i\in\{1,...,m\}$ provides $s_i^*$ containing $\mathbf{l}_i=\mathbf{l}_r$ and $\mathbf{D}_{KL}(\mathbf{v}_i||\mathbf{v}_r)\leq \theta$ compared with $s_r^*$ $(r\neq i)$ when parameter $\alpha_i$ satisfies (1) $\alpha_i \geq \frac{c_1}{2\cdot score_i}$, (2)
$\alpha_i \geq \frac{c_1\cdot score_i + c_2}{score_i^2}$ and
(3) $\alpha_i \leq \frac{B}{nm}\cdot\frac{1}{score_i^2}$, where $score_i = (score_{\text{I}i}+score_{\text{P}i}+1)$.
\end{theorem}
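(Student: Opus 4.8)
The plan is constructive: instead of invoking a fixed-point theorem, I would exhibit the all-truthful strategy profile $\mathbf{S}^\star=(s_1^\star,\dots,s_m^\star)$ and verify directly that it is a BNE meeting the three design objectives. Since the payment, the cost and the budget bound all decompose additively over the $n$ queries (a provider's total deposit is $d=nd_0$ and Budget Feasibility is stated per query as $B/n$), it suffices to fix one query and analyse the resulting one-shot non-cooperative game; the general claim follows by summation. For that query, let $s_i^\star$ be the strategy in which provider $i$ reports the public class-label list $\mathbf{l}^p$ together with the posterior $\mathbf{v}_i$ it honestly computes from the common prior via Bayes' rule on its model's output. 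Because all providers report $\mathbf{l}^p$, we automatically have $\mathbf{l}_i=\mathbf{l}_r$ for the random peer $r$, and because the providers share the common prior their honest posteriors stay mutually within the tolerance $\theta$ (the sense in which participating models are assumed accurate enough, consistent with Definition~1), so $\mathbf{D}_{KL}(\mathbf{v}_i\|\mathbf{v}_r)\le\theta$. Hence the indicator in $score_{\text{I}}$ vanishes, $score_{\text{I}i}=0$, and $score_i=score_{\text{P}i}+1$ everywhere on $\mathbf{S}^\star$; this already yields the structural statement of the theorem, and it remains to rule out deviations and check the economic constraints.

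For the equilibrium property I would lean on the incentive-compatibility result already cited: the scoring rule $score_{\text{I}}+score_{\text{P}}$ is the divergence-based BTS rule, proved strictly Bayes--Nash incentive compatible under stochastic relevance in \cite{radanovic2014incentives}. So, against truthful opponents, provider $i$'s honest posterior uniquely maximises the expected score $\mathbb{E}_{s_r}[score_{\text{I}i}+score_{\text{P}i}]$ over all reports, including the abort action $\perp$, whose payoff after the forfeited deposit is $-d_0<0$. The payment $p_i=\alpha_i(score_{\text{I}i}+score_{\text{P}i}+1)^2$ is strictly increasing in the score on its admissible range $[0,3]$, while the cost $c(s_i)=c_1\,score_i+c_2$ is linear, so the expected utility $\alpha_i\,score_i^2-c_1\,score_i-c_2$ (the refund of $d_0$ cancelling the $-d_0$ term of $u_i$) is nondecreasing in the score exactly when the marginal reward beats the marginal cost, $2\alpha_i\,score_i\ge c_1$; this is constraint~(1). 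Under~(1), moving toward the score-maximising---hence truthful---report never lowers utility, so $s_i^\star$ is a best response and $\mathbf{S}^\star$ is a BNE.

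The remaining two objectives are the constraints rewritten in terms of $\alpha_i$. Individual Rationality at $\mathbf{S}^\star$ requires $\alpha_i\,score_i^2-c_1\,score_i-c_2\ge 0$, i.e.\ $\alpha_i\ge (c_1\,score_i+c_2)/score_i^2$, which is constraint~(2); Budget Feasibility requires $\sum_{i=1}^m\alpha_i\,score_i^2\le B/n$, for which it suffices that each summand obey $\alpha_i\,score_i^2\le B/(nm)$, i.e.\ constraint~(3). Since $score_i\le 3$ is bounded and $c_1,c_2$ are far smaller than $B$, the lower bounds in~(1)--(2) sit below the upper bound in~(3), so an admissible $\alpha_i$ exists; for any such $\alpha_i$ the profile $\mathbf{S}^\star$ is a BNE that is truthful, individually rational and budget feasible, and each $s_i^\star$ indeed satisfies $\mathbf{l}_i=\mathbf{l}_r$ and $\mathbf{D}_{KL}(\mathbf{v}_i\|\mathbf{v}_r)\le\theta$.

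The incentive step is where I expect the real difficulty. Strict properness of the BTS rule is a statement about the \emph{expected} score, yet providers are paid a \emph{quadratic} function of a shifted score and their cost is \emph{identified} with the realised score; one must argue carefully that composing the quadratic payoff with the proper score and subtracting the linear cost under condition~(1) still leaves truthfulness the unique best response---in particular that no intermediate partial-effort report, not merely full shirking or $\perp$, is profitable. Nailing that down, together with making precise exactly what the common-prior assumption buys for the $\mathbf{D}_{KL}(\mathbf{v}_i\|\mathbf{v}_r)\le\theta$ claim, is where most of the work lies; the IR and BF parts are pure algebra.
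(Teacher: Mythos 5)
Your proposal follows essentially the same route as the paper's proof: both exhibit the all-truthful profile, write the expected utility as the quadratic $\alpha_i\, score_i^2 - c_1\, score_i - c_2$ (the deposit cancelling), use constraint (1) to make utility nondecreasing in the score so that the maximum is attained at $score_i=3$, constraint (2) for individual rationality, and constraint (3) for budget feasibility, with abort ruled out by the forfeited $d_0$ and mismatched labels ruled out by the resulting utility $-c_2$. The difficulty you flag at the end --- that strict properness of the BTS rule concerns the \emph{expected} score while the payment is a quadratic of the realised score and the cost is identified with that same score --- is equally unaddressed in the paper's own, much terser, case analysis, so you have not missed anything the paper actually supplies.
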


\begin{proof}
Given other participating providers' strategies $\mathbf{S}_{-s_i}^*$ and a randomly selected provider's strategy $s_r^*$, every provider $i$ can estimate her expected utility by
\begin{align*}
\mathbb{E}_{\mathbf{S}_{-s_i}}[u_i(s_i,\mathbf{S}_{-s_i})|s_r^*] &=
\mathbb{E}_{\mathbf{S}_{-s_i}}[p_i(s_i, \mathbf{S}_{-s_i})|s_r^*] - c(s_i) - d_0\\
&=\alpha_i \times (score_{\text{I}i} + score_{\text{P}i} + 1)^2 + d_0 \\
& \quad - c_1\cdot (score_{\text{I}i}+score_{\text{P}i}+1)\\
& \quad - c_2 - d_0.
\end{align*}
Here, we suppose that provider $i$ does not abort. If she aborts, apparently her expected utility is equal to $-d_0$ which is negative.
For every provider $i$ not aborting, she can maximize her expected utility when her strategy $s_i^*=(\mathbf{l}_i,\mathbf{v}_i)$ leads to $score_i$ reaching the maximum among $[\frac{c_1}{2\alpha_i}, 3]$.
Therefore, every rational provider $i$ is doomed to chose the strategy which enables $score_i$ being equal to $3$.
To be more clear, we summarize the possible cases for every provider $i$'s strategy and her expected utility as following:

\noindent(a). If $s_i^*=(\mathbf{l}_i,\mathbf{v}_i)$, where $\mathbf{l}_i \neq \mathbf{l}_r$, her expected utility is negative due to
$(score_{\text{I}i} + score_{\text{P}i} + 1)=0$ leading to $\mathbb{E}_{\mathbf{S}_{-s_i}}[u_i(s_i,\mathbf{S}_{-s_i})|s_r^*]=-c_2$.

\noindent(b). If $s_i^*=(\mathbf{l}_i,\mathbf{v}_i)$, where $\mathbf{l}_i=\mathbf{l}_r$ and $\mathbf{D}_{KL}(\mathbf{v}_i||\mathbf{v}_r)\leq \theta$, her expected utility is equal to $9\alpha_i-3c_1-c_2$ which is positive due to parameter constraint (2), and maximized due to $(score_{\text{I}i} + score_{\text{P}i} + 1)=3$.

\noindent(c). If abort, her expected utility is negative due to $\mathbb{E}_{\mathbf{S}_{-s_i}}[u_i(s_i,\mathbf{S}_{-s_i})|s_r^*]=-d_0$.

\noindent Hence, strategy profile $\mathbf{S}^*=(s_1^*, ..., s_m^*)$ in Theorem~\ref{the:pricing}, where $s_i^*$ satisfies $\mathbf{l}_i=\mathbf{l}_r$ and $\mathbf{D}_{KL}(\mathbf{v}_i||\mathbf{v}_r)\leq \theta$ is a BNE.
\end{proof}
\section{Experiment}\label{sec:exp}

\subsection{Implementation and Setup}
\noindent\textbf{Prediction Aggregation with TEEs.}
We initialize TEEs by utilizing SGX SDK of version 2.5.
In the SGX environment, we implement the prediction aggregation program (\textit{i.e.}, Algorithm~\ref{alg:truthdiscovery}) by using C/C++ programming language.

\noindent\textbf{Smart Contract.} We also implement the PS contract with the Solidity programming language of Ethereum and deploy it on the Ropsten Test Network via MetaMask\footnote{https://metamask.io/}.
%

%
\noindent\textbf{Dataset.}
We totally use three datasets to simulate three prediction tasks.
Specifically, we use two well-studied image datasets, including MNIST\footnote{http://yann.lecun.com/exdb/mnist/} and ImageNet\footnote{http://www.image-net.org/challenges/LSVRC/2012/} for image prediction, as well as a public text dataset, namely 20 Newsgroups\footnote{http://qwone.com/~jason/20Newsgroups/} for text prediction.
%
%
%
With respect to three datasets, we will correspondingly sample a number of test data for evaluation.
Note that MNIST, ImageNet and 20 Newsgroups contain $10$K, $100$K and near $8$K test data, respectively.
More concrete information of the three datasets are shown in TABLE~\ref{tab:datasets}.
%

\begin{table}[htbp]
 \centering
 \caption{\label{tab:datasets} Real-world datasets used in the experiment.}
 \begin{tabular}{lcccc}
  \toprule
  \textbf{Dataset} & \textbf{Type} & \textbf{Size} & \textbf{Features} & \textbf{Labels} \\
  \midrule
  MNIST & Image & $70$K   & $20$x$20$  & 10\\
  ImageNet & Image & $1.26$M  & $224$x$224$x$3$  & 1000\\
  20 Newsgroups & Text & $18846$  & --  &20 \\
  \bottomrule
 \end{tabular}
 \vspace{-5pt}
\end{table}

\noindent\textbf{Provider Simulation.}
We collect three groups of various trained models which are used to simulate providers for serving prediction.
We separately collect $6$, $10$ and $15$ models under various frameworks which are evaluated on MNIST, 20 Newsgroups and ImageNet.
Specifically, we implement and train the models for the MNIST and 20 Newsgroups by ourselves, and download off-the-shelf models for ImageNet from two public model sources\footnote{https://keras.io/api/applications/}$^{,}$\footnote{https://pytorch.org/docs/stable/torchvision/models.html}.
%
%
Due to the space limitation, we only present the models trained on 20 Newsgroups dataset in TABLE~\ref{tab:models-N}.
%
%
\begin{table}[htbp]
  \centering
  \caption{\label{tab:models-N} Models evaluated on 20 Newsgroups.}
    \begin{tabular}{lcl|lcl}
    \toprule

    \textbf{Model} & \textbf{Framework} & \textbf{Acc.} & \textbf{Model} & \textbf{Framework} & \textbf{Acc.} \\
    \midrule
    Boost & SKLearn & 0.740  & KNN   & SKLearn & 0.660 \\
    Bagging & SKLearn & 0.660  & CNN   & Keras & 0.730 \\
    Dec. tree & SKLearn & 0.550  & DNN   & Keras & 0.810 \\
    Ran. forest & SKLearn & 0.760  & RNN   & Keras & 0.760 \\
    SVM   & SKLearn & 0.820  & RCNN  & Keras & 0.720 \\
    \bottomrule
    \end{tabular}%
  \label{tab:addlabel}%
\vspace{-10pt}
\end{table}%

We simulate distrustful predictions by perturbing normal predictions, where perturbations are sampled from the uniform distribution on interval $(0, 1)$.
%
%
With the perturbation methods, we simulate a distrusting provider by perturbing a model's all predictions.

We will consider three cases in perturbing predictions of models, including (a) no perturbation, (b) perturbing no more than $\frac{M}{2}$ models' predictions, and (c) perturbing more than $\frac{M}{2}$ models' predictions, where $M$ is the total number of models.
Note that case (a) is used to simulate the BNE setting induced by Theorem~\ref{the:pricing}, where each provider is incentivized to provide truthful predictions; case (c) creates the setting, where providers lack sufficient motivation for prediction truthfulness; case (b) refers to the setting between case (a) and (c).

In addition, our experiments are conducted in a Ubuntu 16.04 server equipped with a CPU of 3.40GHz, 32 GB RAM and a GPU of Nvidia GTX-1080.\looseness=-1

\subsection{Evaluation}
Our evaluation is four-fold:
(\emph{i}) To highlight the advantage of Algorithm~\ref{alg:truthdiscovery}, we compare the accuracy of predictions generated by Algorithm~\ref{alg:truthdiscovery} and that by averaging (a traditional ensemble strategy);
(\emph{ii}) To demonstrate the effectiveness of the incentive mechanism, we plot and compare simulation results of prediction aggregation regarding case (a), (b) and (c) in terms of accuracy;
(\emph{iii}) To show service cost, we estimate the computation complexity of prediction aggregation with a TEE and evaluate gas costs caused by the interaction between the PS contract and the TEE;
(\emph{iv}) To answer whether or not prediction aggregation via Algorithm~\ref{alg:truthdiscovery} is effective to resist membership inference attacks, we conduct state-of-the-art attacks~\cite{salem2019ml} and present empirical evidences.\looseness=-1

\begin{table}[htbp]
\vspace{-5pt}
 \centering
 \caption{\label{tab:accuracy} Accuracy comparison.}
 \begin{tabular}{lcccc}
  \toprule
  \textbf{Dataset} & \textbf{Avg.} & \textbf{Label} & \textbf{Rank} & \textbf{Probability} \\
  \midrule
  MNIST & $0.907$ & $0.978$   & $0.973$  & $0.981$\\
  ImageNet & $0.724$ & $0.790$  & $0.764$  & $0.789$\\
  20 Newsgroups & $0.721$ & $0.862$  & $0.836$  & $0.862$\\
  \bottomrule
 \end{tabular}
 \vspace{-5pt}
\end{table}
First of all, as shown in TABLE~\ref{tab:accuracy}, for each dataset, the accuracy of the predictions generated by Algorithm~\ref{alg:truthdiscovery} regrading three output formats (\emph{i.e.}, $3_{th}$ to $5_{th}$ column) is always better than the averaging accuracy (\emph{i.e.}, $2_{th}$ column) of all participating models.
We can see that on ImageNet dataset, the accuracy improvement is relatively small, but as pointed out by~\cite{russakovsky2015imagenet}, spending a lot of time and energy to achieve minor accuracy improvement on difficult object recognition task is deserved.\looseness=-1

Second, Fig.~\ref{fig:case} and Fig.~\ref{fig:format} (from left to right) show the accuracy of aggregated predictions regarding three perturbation cases on MNIST, 20 Newsgroups and ImageNet, respectively.
For each dataset, it can be clearly seen that the accuracy in case (a) is always higher than that in case (b) and (c), which is because that participating providers offer truthful predictions with sufficient incentives.
We also can see that in case (c), where a vast majority of participating providers report meaningless predictions, the accuracy is never better than $0.5$.
The reason is that Algorithm~\ref{alg:truthdiscovery} fails to learn the truth when a majority of predictions are not enough accurate, and thus our incentive mechanism is necessary to handle case (c).
In addition, from Fig.~\ref{fig:case}, the evaluated accuracy slightly grows up with the increasing queries.
According to Fig.~\ref{fig:format}, we also notice that the accuracy of the rank-level predictions on ImageNet drops more obviously than the other two datasets in more serious perturbation cases.
It might be caused by the large number of labels, \emph{i.e.}, 1000, on ImageNet dataset.

Third, Fig.~\ref{fig:time} presents the estimated time costs of prediction aggregation inside the TEE over three datasets.
Note that we omit the one-time cost of setting up a TEE.
Clearly, more queries spend more times.
By comparing the three sub-figures, we also can know that the time complexity becomes higher as the number of labels of the query task increases.
Recall that the number of class labels of MNIST, 20 Newsgroups and ImageNet is $10$, $20$ and $1000$, respectively.
Besides, gas costs are mainly derived from two parts:
(1) execution costs of the PS contract when its three entry points, \textsf{Deposit}, \textsf{Request} and \textsf{Response}, are correspondingly invoked, and
(2) execution costs of the TEE's transaction (on entry point \textsf{Response}) which contains outputs $outp$, $outp_{attr}$  and signatures $\sigma$, $\sigma_{attr}$ (entirely $2\times 70$ bytes).
Also, the gas costs grow up with increasing participating providers.
Note that encrypted input data and predictions are transmitted off-chain, and thus the magnitude of query makes negligible effect on the gas costs.
We only test the gas costs by simulating 6 providers (on MNIST).
Specifically, part (1) totally spends $510,815$~units gas, including $389,373$~units for \textsf{Deposit}, $102,200 $~units for \textsf{Request} and $19,242$~units for \textsf{Response}.
The gas costs for sending the response transaction in part (2) are about $74,370$~units.\looseness=-1

\begin{figure}
	\centering
   \hspace{-20pt}
	\subfigure{
		\begin{minipage}[b]{0.13\textwidth}
			\includegraphics[width=1.3\textwidth]{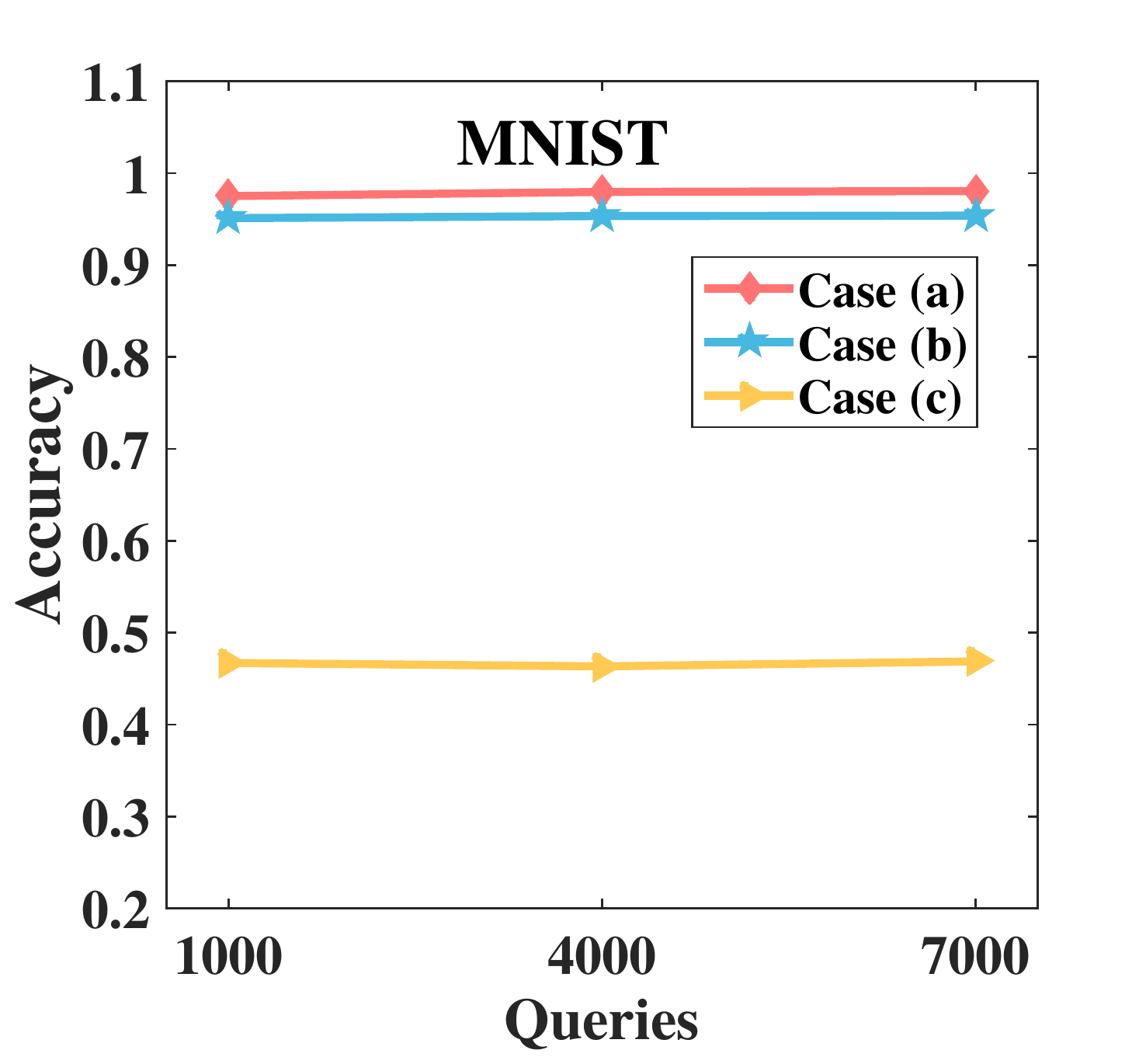}
		\end{minipage}
			\label{fig:case:mnist}
	}
	\hspace{10pt}
    	\subfigure{
    		\begin{minipage}[b]{0.13\textwidth}
   		 	\includegraphics[width=1.3\textwidth]{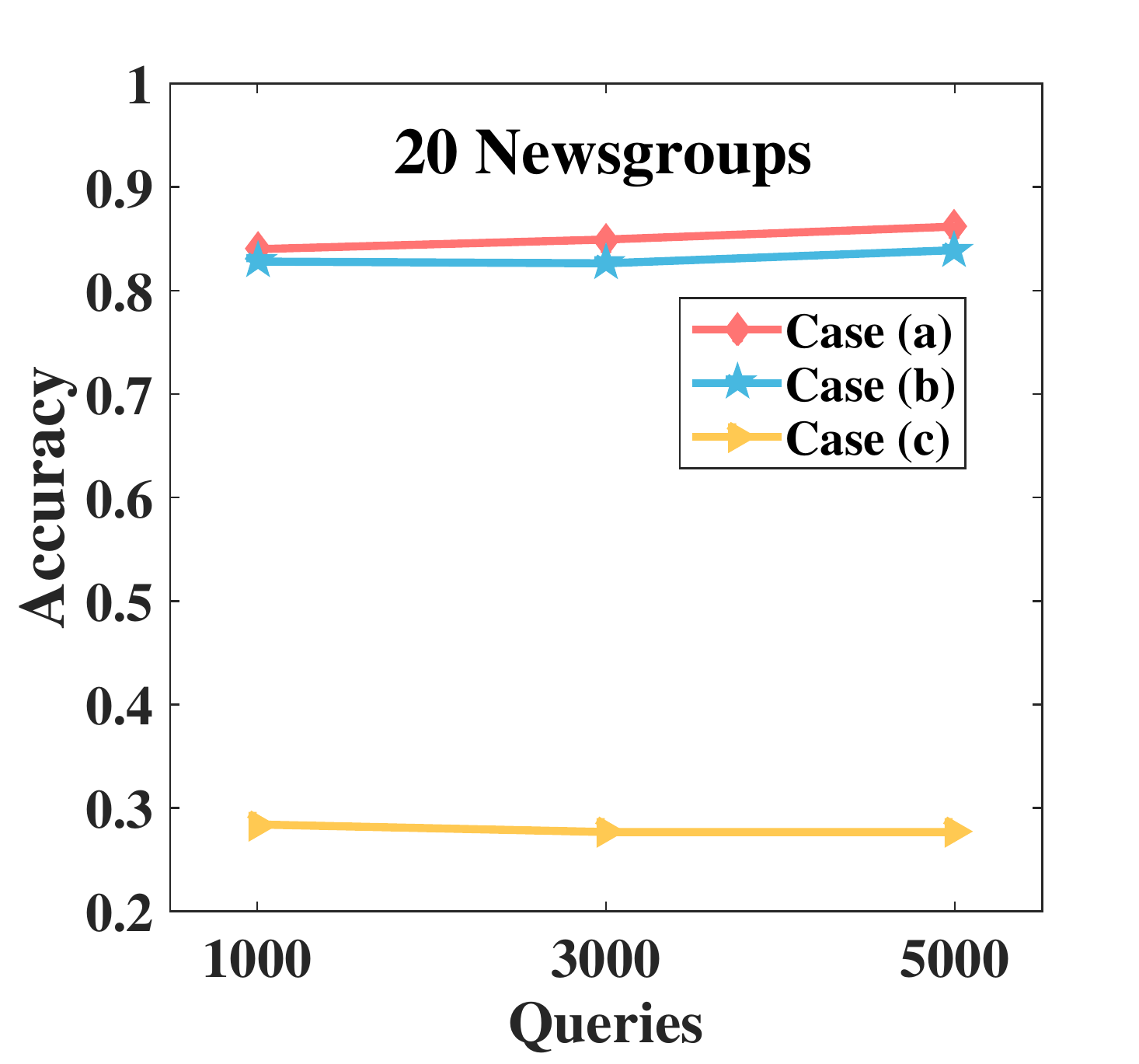}
    		\end{minipage}
		\label{fig:case:news}
    	}
    \hspace{10pt}
    	\subfigure{
    		\begin{minipage}[b]{0.13\textwidth}
   		 	\includegraphics[width=1.3\textwidth]{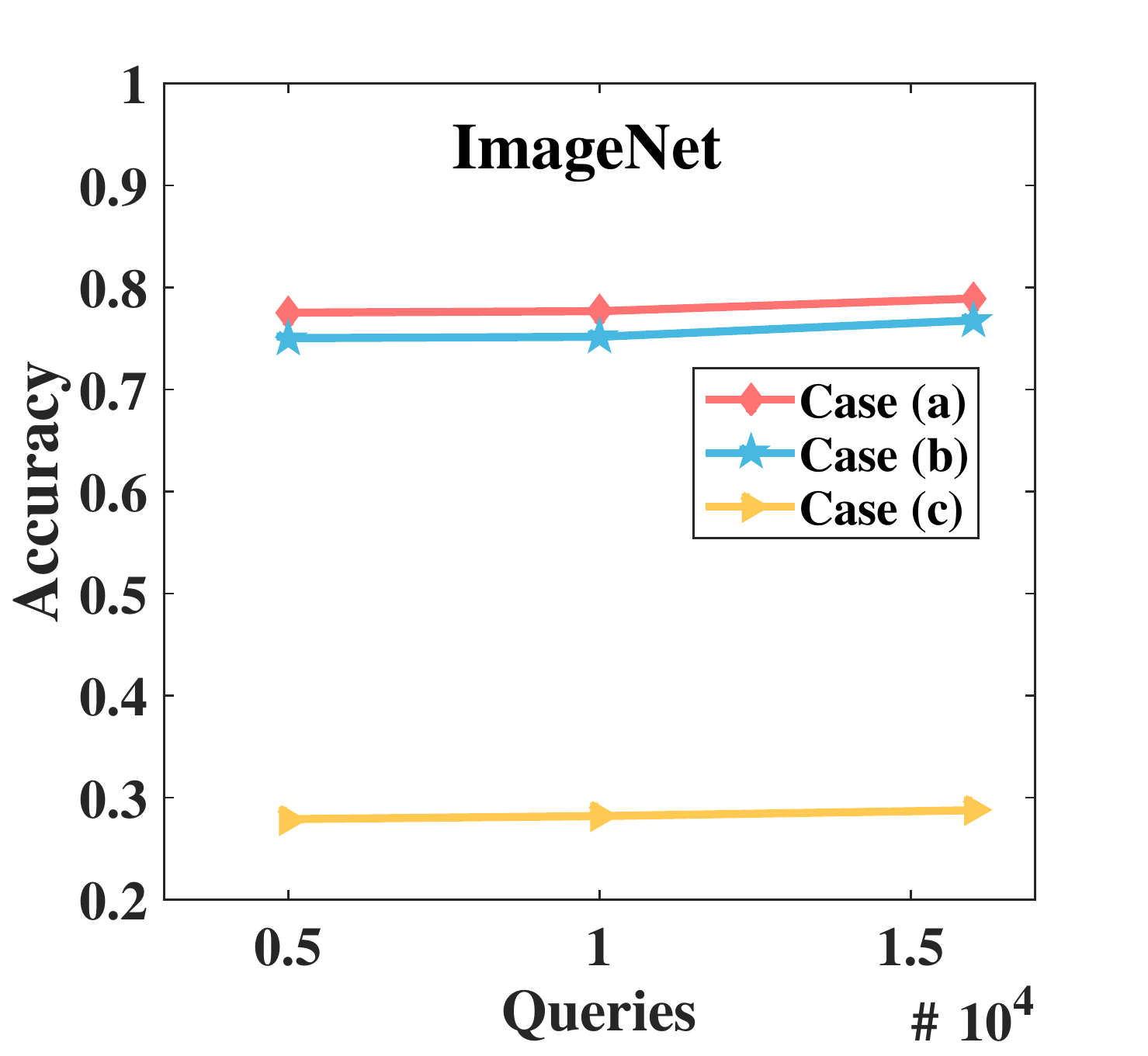}
    		\end{minipage}
		\label{fig:case:image}
    	}
	\caption{Accuracy with increasing queries..}
	\label{fig:case}
	\vspace{-10pt}
\end{figure}

\begin{figure}

	\centering
   \hspace{-20pt}
	\subfigure{
		\begin{minipage}[b]{0.13\textwidth}
			\includegraphics[width=1.3\textwidth]{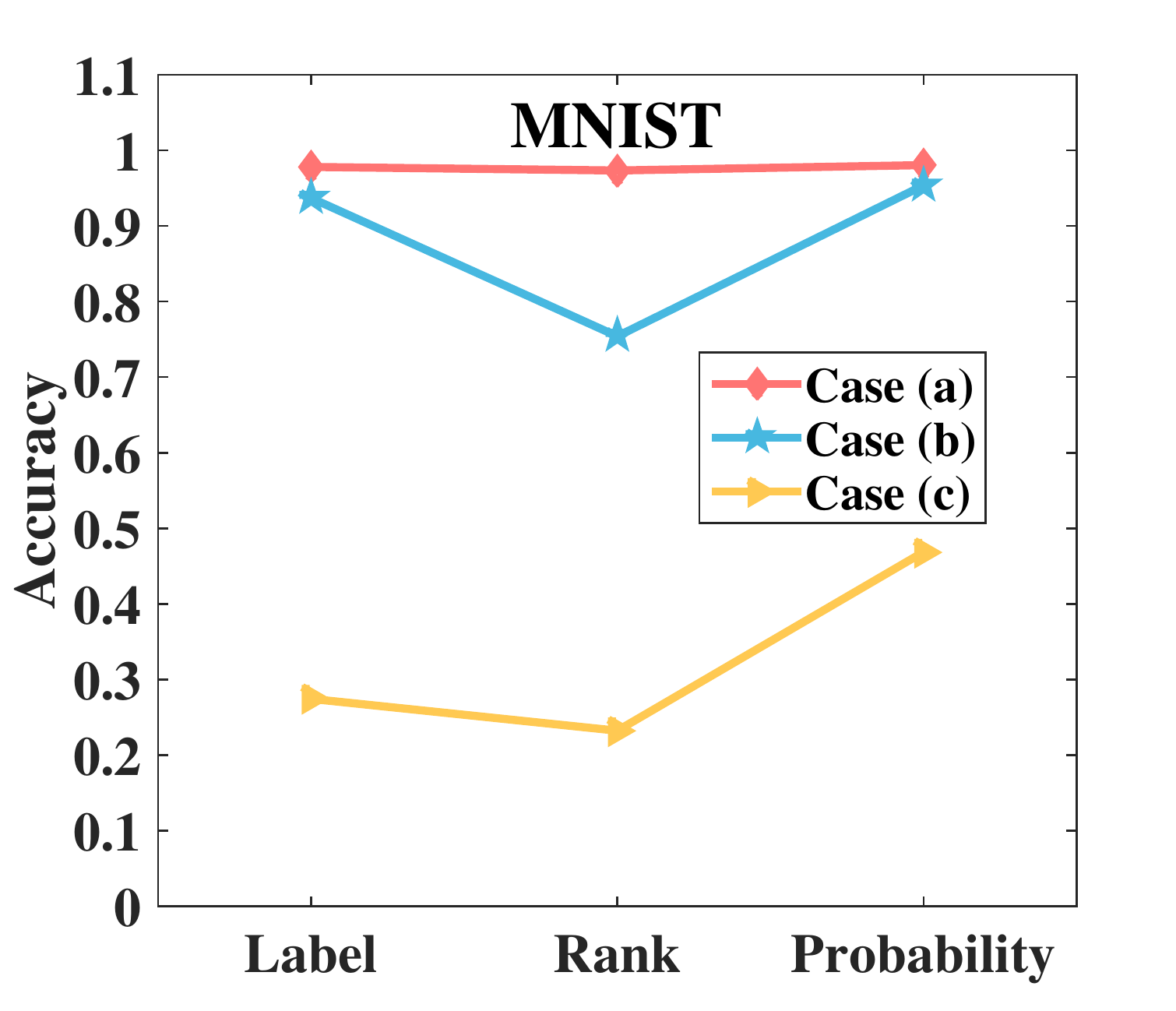}
		\end{minipage}
			\label{fig:format:mnist}
	}
	\hspace{10pt}
    	\subfigure{
    		\begin{minipage}[b]{0.13\textwidth}
   		 	\includegraphics[width=1.3\textwidth]{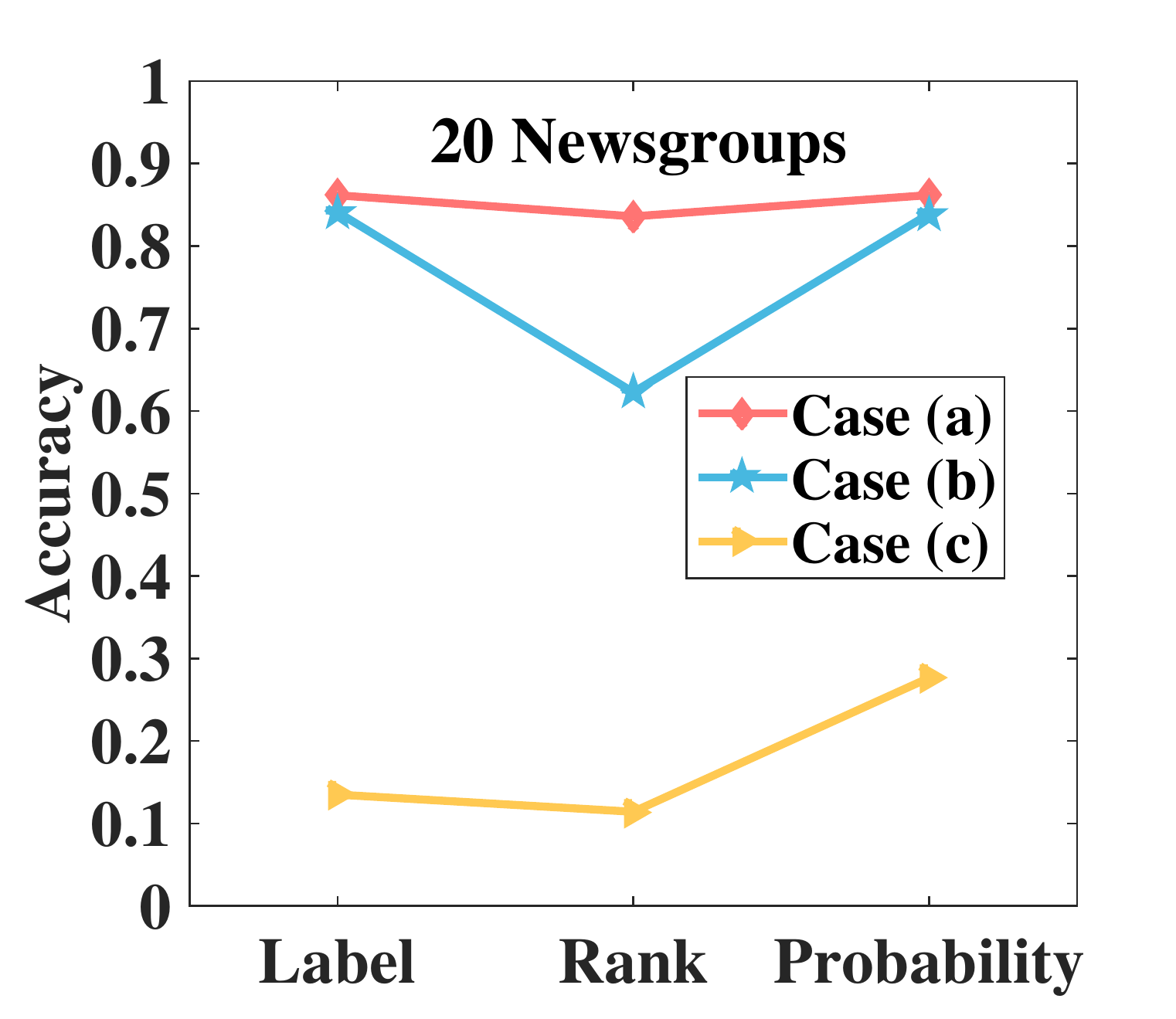}
    		\end{minipage}
		\label{fig:format:news}
    	}
    \hspace{10pt}
    	\subfigure{
    		\begin{minipage}[b]{0.13\textwidth}
   		 	\includegraphics[width=1.3\textwidth]{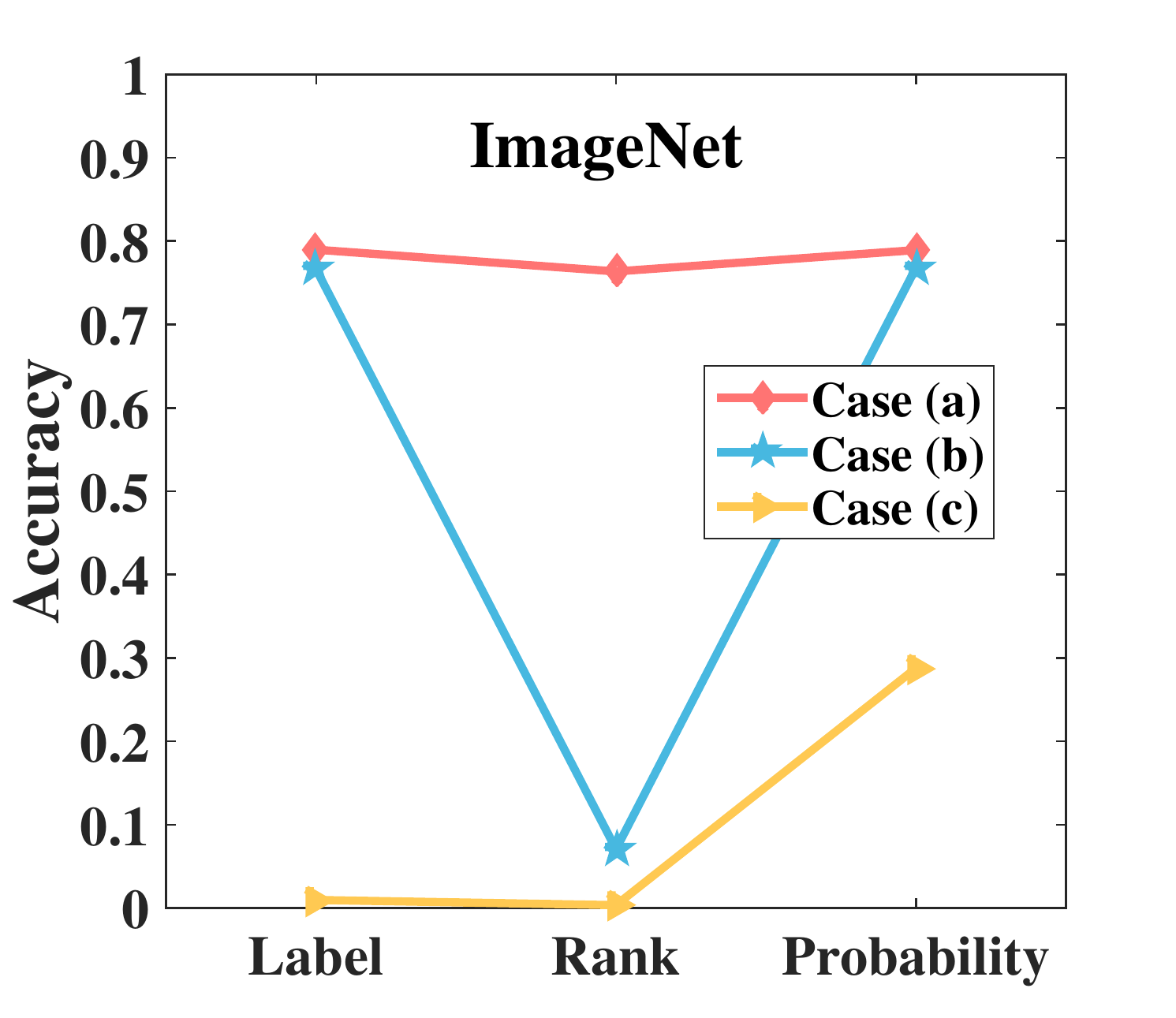}
    		\end{minipage}
		\label{fig:format:image}
    	}
	\caption{Accuracy on different-format predictions.}
	\label{fig:format}
	\vspace{-20pt}
\end{figure}

\begin{table}[htbp]
\vspace{-5pt}
  \centering
  \caption{Comparison of adversaries' attack performance.}
    \begin{tabular}{cccc}
    \toprule
    \textbf{Type} & \textbf{Target model} & \textbf{Precision} & \textbf{Recall} \\
    \midrule
    \multirow{2}[4]{*}{\textbf{Adversary 1}} & Single & 0.996 &  0.503\\
\cmidrule{2-4}  &  Ensemble &  \textbf{0.056}  & \textbf{0.054} \\
    \midrule
    \multirow{2}[4]{*}{\textbf{Adversary 2}} & Single &  0.997 & 0.504\\
\cmidrule{2-4}  & Ensemble & 0.987 & 0.499 \\
    \bottomrule
    \end{tabular}%
  \label{tab:mia}%
  \vspace{-10pt}
\end{table}%

Last, we launch membership inference attacks using two types of adversaries with increasingly strong attack capabilities in prior work (\emph{i.e.}, adversary 1 and 2, detailed in \cite{salem2019ml}'s TABLE I) and show the attack results.
Similar to the work~\cite{salem2019ml}, we adopt three models as an ensemble, but the difference is that our ensemble strategy is Algorithm~\ref{alg:truthdiscovery} rather than stacking.
Besides, the used three models are CNN, RNN and MLP trained on the MNIST dataset.
For comparison, we also conduct the same attacks on the single CNN model.
As shown in TABLE~\ref{tab:mia}, the attack results demonstrate that ensemble model under Algorithm~\ref{alg:truthdiscovery} is able to reduce the attack performance of adversary 1, but not adversary 2.
Concretely, for adversary 1, the precision drops from $0.996$ to $0.056$ and the recall drops from $0.503$ to $0.054$.
But for adversary 2, there has no effect.
It is difficult to suggest certain confident explanation for the attack results like the previous work~\cite{salem2019ml}.\looseness=-1

\begin{figure}
	\centering
   \hspace{-20pt}
	\subfigure{
		\begin{minipage}{0.13\textwidth}
			\includegraphics[width=1.3\textwidth]{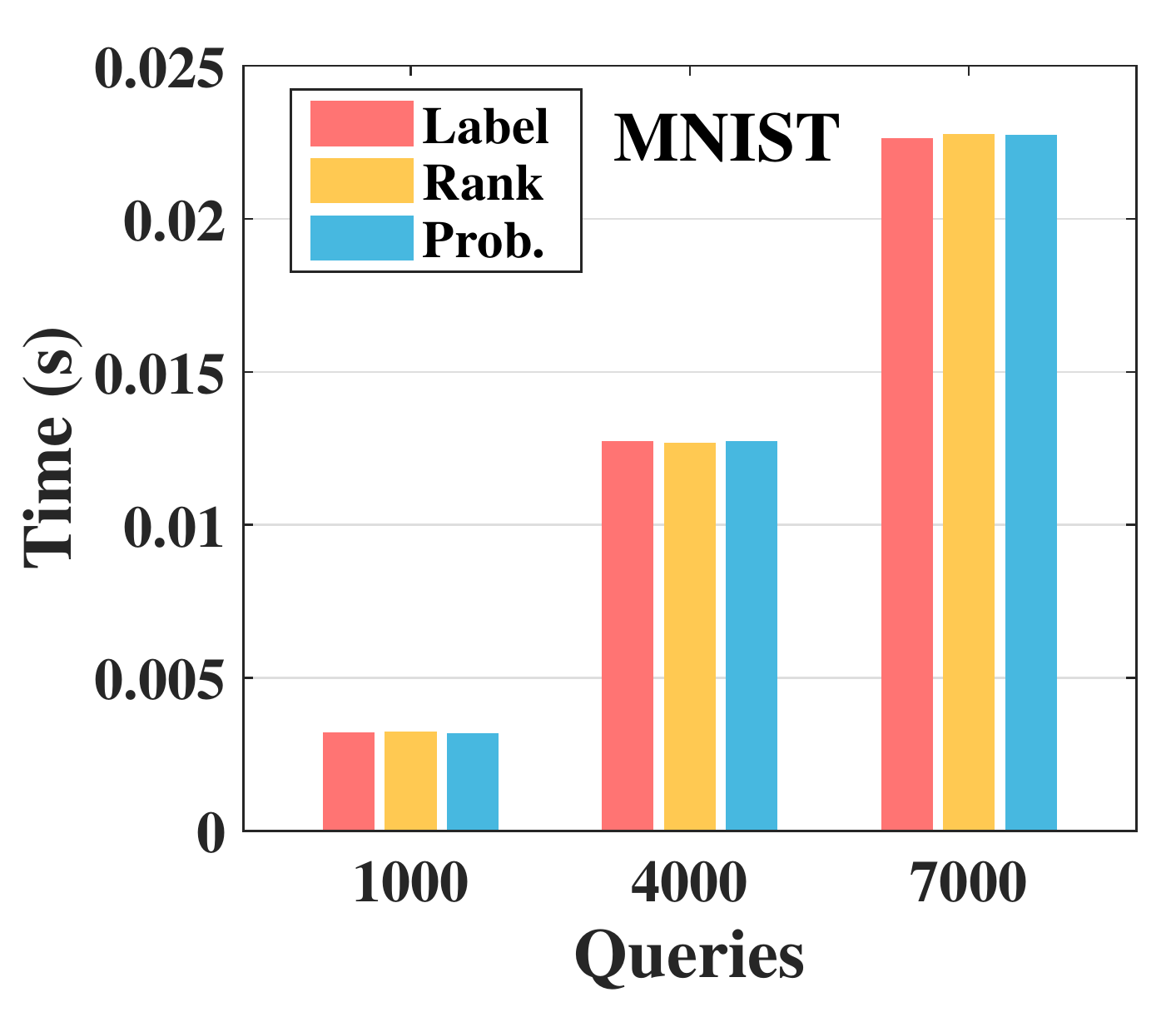}
		\end{minipage}
			\label{fig:time:mnist}
	}
	\hspace{10pt}
    	\subfigure{
    		\begin{minipage}{0.13\textwidth}
   		 	\includegraphics[width=1.3\textwidth]{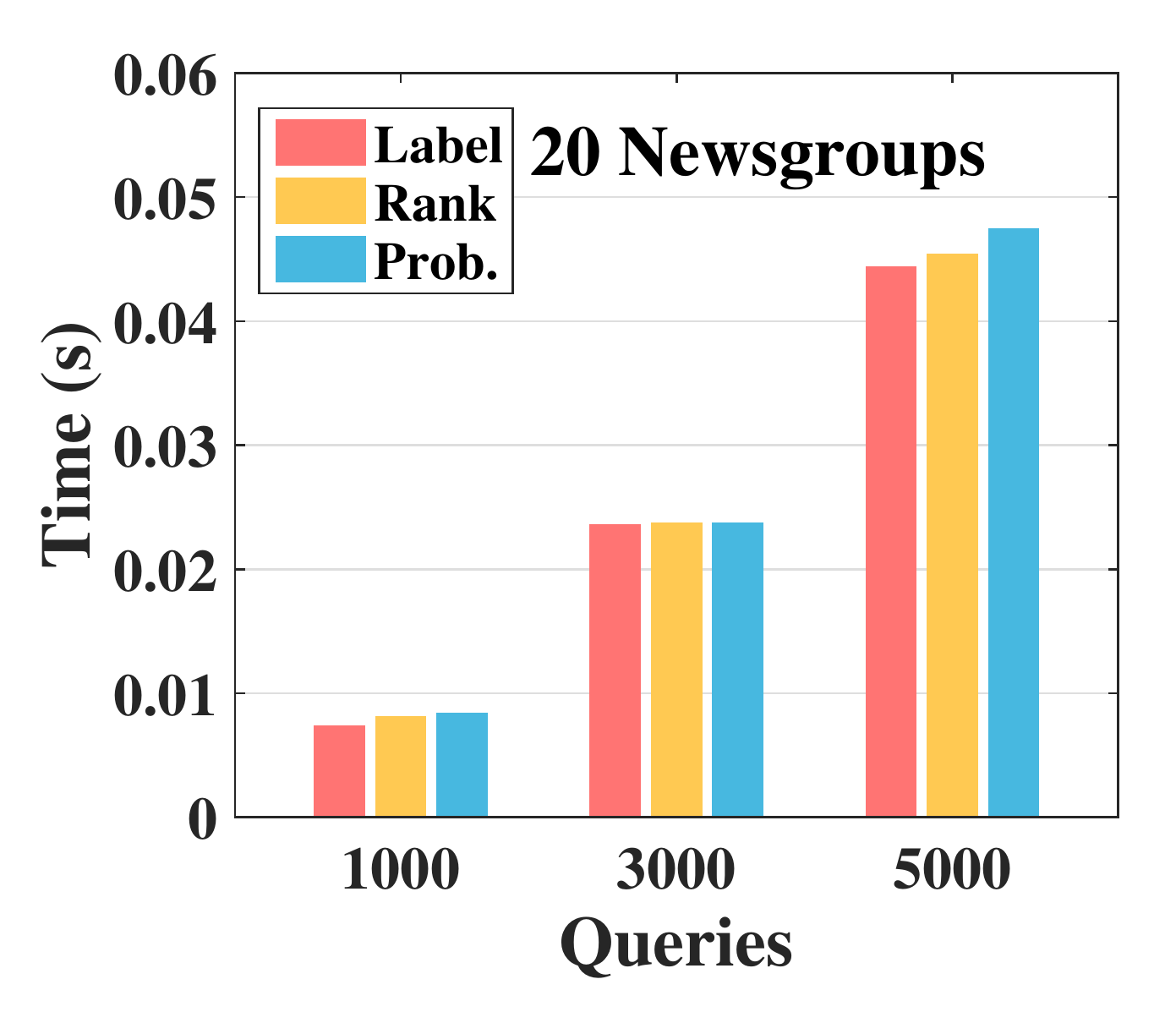}
    		\end{minipage}
		\label{fig:time:news}
    	}
    \hspace{10pt}
    	\subfigure{
    		\begin{minipage}{0.13\textwidth}
   		 	\includegraphics[width=1.3\textwidth]{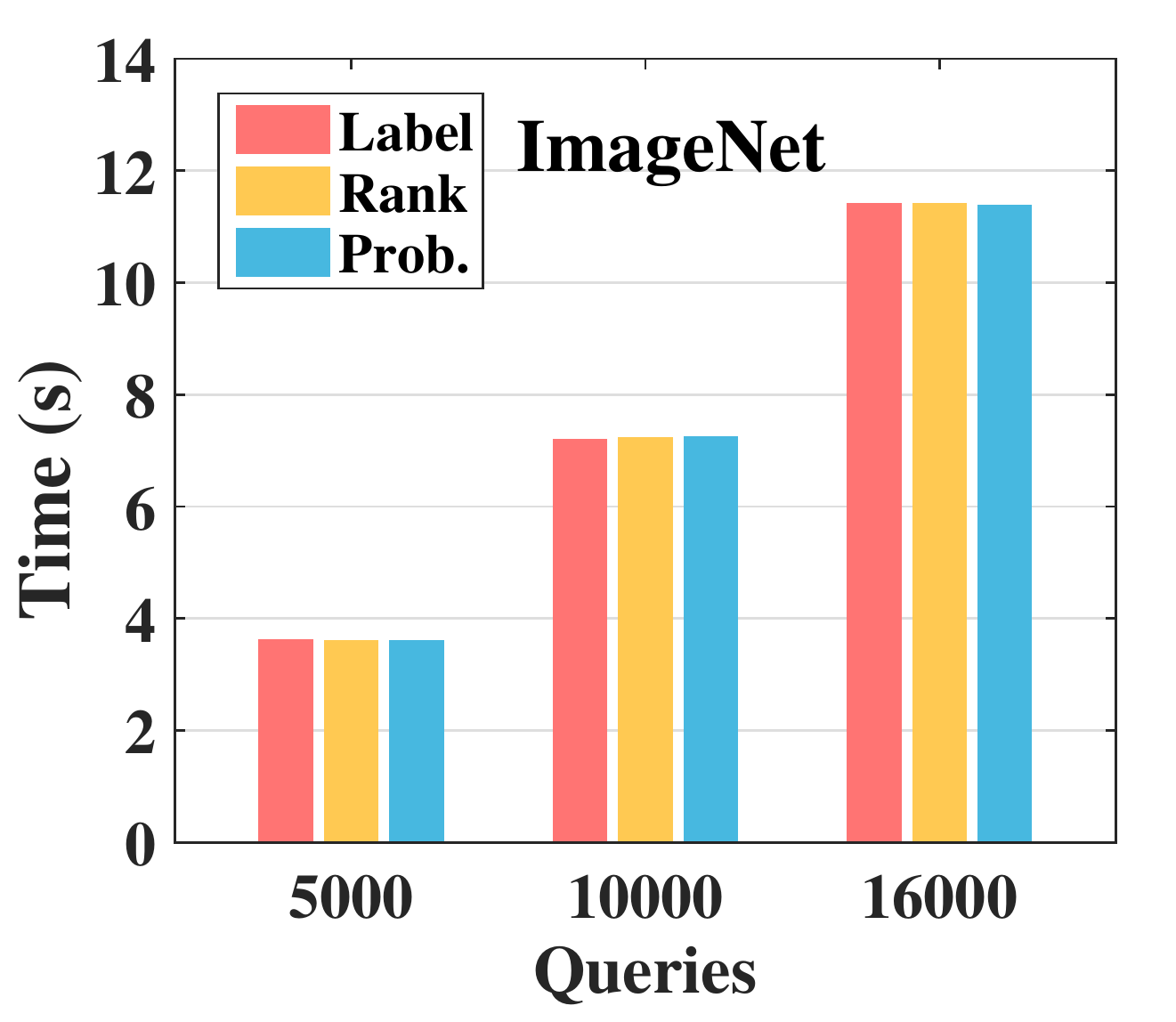}
    		\end{minipage}
		\label{fig:time:image}
    	}
	\caption{Time complexity of prediction aggregation inside a TEE.}\label{fig:time}
\vspace{-20pt}
\end{figure}

\section{Limitation and Future Work}\label{sec:limitation}
\noindent\textbf{Other prediction formats.}
Our work focuses on the prediction formats, including top-1 label, ranked labels and posterior probability, but fails to support other formats, such as text data, in Natural language processing (NLP) tasks.
Taking language translation as an example, Sequence-to-Sequence models are usually used, which take as input a sequence of words in certain language and output another sequence of words in a target language, where output format belongs to text data.\looseness=-1
%

\noindent\textbf{Adversarial examples.}
We assume benign input data and do not consider adversarial examples (AEs), \emph{i.e.}, input data injected with imperceptible perturbations~\cite{szegedy2013intriguing}.
AEs can mislead a deep neural network to incorrectly classify
an originally correctly classified input.
Recently, a promising approach against AEs is to create a robust ensemble model by carefully considering the diversity of individual models~\cite{pang2019improving, liu2019deep}.
In our future work, we will follow this direction and take into account the factors regarding model diversity to refine our incentive mechanism for FedServing.

\section{Conclusion}\label{sec:conclusion}
In this paper, we present a prediction serving framework, named as FedServing, towards trained models from various sources.
FedServing enables locally deploying models and provides collective prediction services for charging users.
For motivating truthful predictions, we customize an incentive mechanism based on Bayesian game theory.
For boosting prediction accuracy, we use truth discovery algorithms working jointly with the incentive mechanism to eliminate the effect of low-accuracy predictions.
Our proposed design supports popular prediction formats, including top-1 label, ranked labels and posterior probability.
Besides, we build FedServing on the blockchain to ensure exchange fairness and leverage TEEs to securely aggregate predictions as well.
With extensive experiments, we effectively validate the expected properties of our mechanism and empirically demonstrate its capability of reducing the risk of certain membership inference attack.

\bibliographystyle{IEEEtran}
\balance
\small
\bibliography{reference}

\begin{thebibliography}{10}
\providecommand{\url}[1]{#1}
\csname url@samestyle\endcsname
\providecommand{\newblock}{\relax}
\providecommand{\bibinfo}[2]{#2}
\providecommand{\BIBentrySTDinterwordspacing}{\spaceskip=0pt\relax}
\providecommand{\BIBentryALTinterwordstretchfactor}{4}
\providecommand{\BIBentryALTinterwordspacing}{\spaceskip=\fontdimen2\font plus
\BIBentryALTinterwordstretchfactor\fontdimen3\font minus
  \fontdimen4\font\relax}
\providecommand{\BIBforeignlanguage}[2]{{%
\expandafter\ifx\csname l@#1\endcsname\relax
\typeout{** WARNING: IEEEtran.bst: No hyphenation pattern has been}%
\typeout{** loaded for the language `#1'. Using the pattern for}%
\typeout{** the default language instead.}%
\else
\language=\csname l@#1\endcsname
\fi
#2}}
\providecommand{\BIBdecl}{\relax}
\BIBdecl

\bibitem{ml2019market}
``Global machine learning market research report,''
  \url{https://www.marketresearchfuture.com/reports/machine-learning-market-2494},
  2019.

\bibitem{bello2016social}
G.~Bello-Orgaz, J.~J. Jung, and D.~Camacho, ``Social big data: Recent
  achievements and new challenges,'' \emph{Information Fusion}, vol.~28, pp.
  45--59, 2016.

\bibitem{song2017machine}
C.~Song, T.~Ristenpart, and V.~Shmatikov, ``Machine learning models that
  remember too much,'' in \emph{Proc. of ACM CCS}, 2017.

\bibitem{juvekar2018gazelle}
C.~Juvekar, V.~Vaikuntanathan, and A.~Chandrakasan, ``$\{$GAZELLE$\}$: A low
  latency framework for secure neural network inference,'' in \emph{Proc. of
  USENIX Security}, 2018.

\bibitem{mireshghallah2020shredder}
F.~Mireshghallah, M.~Taram, P.~Ramrakhyani \emph{et~al.}, ``Shredder: Learning
  noise distributions to protect inference privacy,'' in \emph{Proc. of
  ASPLOS}, 2020.

\bibitem{tramer2018slalom}
F.~Tramer and D.~Boneh, ``Slalom: Fast, verifiable and private execution of
  neural networks in trusted hardware,'' in \emph{ICML}, 2018.

\bibitem{peng2015pay}
D.~Peng, F.~Wu, and G.~Chen, ``Pay as how well you do: A quality based
  incentive mechanism for crowdsensing,'' in \emph{Proc. of ACM MobiHoc}, 2015.

\bibitem{yang2017designing}
S.~Yang, F.~Wu, S.~Tang \emph{et~al.}, ``On designing data quality-aware truth
  estimation and surplus sharing method for mobile crowdsensing,'' \emph{IEEE
  Journal on Selected Areas in Communications}, vol.~35, no.~4, pp. 832--847,
  2017.

\bibitem{jin2017theseus}
H.~Jin, L.~Su, and K.~Nahrstedt, ``Theseus: Incentivizing truth discovery in
  mobile crowd sensing systems,'' in \emph{Proc. of ACM MobiHoc}, 2017.

\bibitem{sun2020towards}
P.~Sun, Z.~Wang, Y.~Feng \emph{et~al.}, ``Towards personalized
  privacy-preserving incentive for truth discovery in crowdsourced
  binary-choice question answering,'' in \emph{INFOCOM}, 2020.

\bibitem{gong2018incentivizing}
X.~Gong and N.~Shroff, ``Incentivizing truthful data quality for quality-aware
  mobile data crowdsourcing,'' in \emph{Proc. of ACM MobiHoc}, 2018, pp.
  161--170.

\bibitem{zhao2020pace}
B.~Zhao, S.~Tang, X.~Liu, and X.~Zhang, ``Pace: privacy-preserving and
  quality-aware incentive mechanism for mobile crowdsensing,'' \emph{IEEE
  Transactions on Mobile Computing}, 2020.

\bibitem{mangai2010survey}
U.~G. Mangai, S.~Samanta, S.~Das, and P.~R. Chowdhury, ``A survey of decision
  fusion and feature fusion strategies for pattern classification,'' \emph{IETE
  Technical review}, vol.~27, no.~4, pp. 293--307, 2010.

\bibitem{shokri2017membership}
R.~Shokri, M.~Stronati, C.~Song, and V.~Shmatikov, ``Membership inference
  attacks against machine learning models,'' in \emph{IEEE S\&P}, 2017.

\bibitem{papernot2018marauder}
N.~Papernot, ``A marauder's map of security and privacy in machine learning: An
  overview of current and future research directions for making machine
  learning secure and private,'' in \emph{Proc. of ACM AISec}, 2018, pp. 1--1.

\bibitem{radanovic2014incentives}
G.~Radanovic and B.~Faltings, ``Incentives for truthful information elicitation
  of continuous signals,'' in \emph{AAAI}, 2014.

\bibitem{crankshaw2017clipper}
D.~Crankshaw, X.~Wang, G.~Zhou, M.~J. Franklin, J.~E. Gonzalez, and I.~Stoica,
  ``Clipper: A low-latency online prediction serving system,'' in \emph{Proc.
  of USENIX NSDI}, 2017.

\bibitem{lee2018pretzel}
Y.~Lee, A.~Scolari, B.-G. Chun, M.~D. Santambrogio, M.~Weimer, and
  M.~Interlandi, ``Pretzel: Opening the black box of machine learning
  prediction serving systems,'' in \emph{Proc. of USENIX OSDI}, 2018.

\bibitem{zhang2012reputation}
Y.~Zhang and M.~Van~der Schaar, ``Reputation-based incentive protocols in
  crowdsourcing applications,'' in \emph{INFOCOM}, 2012.

\bibitem{luo2014profit}
T.~Luo, H.-P. Tan, and L.~Xia, ``Profit-maximizing incentive for participatory
  sensing,'' in \emph{INFOCOM}, 2014.

\bibitem{jin2016enabling}
H.~Jin, L.~Su, B.~Ding, K.~Nahrstedt, and N.~Borisov, ``Enabling
  privacy-preserving incentives for mobile crowd sensing systems,'' in
  \emph{IEEE ICDCS}, 2016.

\bibitem{jin2016inception}
H.~Jin, L.~Su, H.~Xiao, and K.~Nahrstedt, ``Inception: Incentivizing
  privacy-preserving data aggregation for mobile crowd sensing systems,'' in
  \emph{Proc. of ACM MobiHoc}, 2016.

\bibitem{jin2017centurion}
H.~Jin, L.~Su, and K.~Nahrstedt, ``Centurion: Incentivizing multi-requester
  mobile crowd sensing,'' in \emph{INFOCOM}, 2017.

\bibitem{jin2019dynamic}
H.~Jin, H.~Guo, L.~Su, K.~Nahrstedt, and X.~Wang, ``Dynamic task pricing in
  multi-requester mobile crowd sensing with markov correlated equilibrium,'' in
  \emph{INFOCOM}, 2019.

\bibitem{zhang2015incentivize}
Q.~Zhang, Y.~Wen, X.~Tian, X.~Gan, and X.~Wang, ``Incentivize crowd labeling
  under budget constraint,'' in \emph{INFOCOM}, 2015.

\bibitem{wang2017melody}
H.~Wang, S.~Guo, J.~Cao, and M.~Guo, ``Melody: A long-term dynamic
  quality-aware incentive mechanism for crowdsourcing,'' \emph{IEEE
  Transactions on Parallel and Distributed Systems}, vol.~29, no.~4, pp.
  901--914, 2017.

\bibitem{zhang2015truthful}
X.~Zhang, G.~Xue, R.~Yu, D.~Yang, and J.~Tang, ``Truthful incentive mechanisms
  for crowdsourcing,'' in \emph{INFOCOM}.\hskip 1em plus 0.5em minus
  0.4em\relax IEEE, 2015, pp. 2830--2838.

\bibitem{yang2012crowdsourcing}
D.~Yang, G.~Xue, X.~Fang, and J.~Tang, ``Crowdsourcing to smartphones:
  Incentive mechanism design for mobile phone sensing,'' in \emph{Proc. of
  MobiCom}, 2012.

\bibitem{han2016posted}
K.~Han, H.~Huang, and J.~Luo, ``Posted pricing for robust crowdsensing,'' in
  \emph{Proc. of ACM MobiHoc}, 2016.

\bibitem{zhao2014crowdsource}
D.~Zhao, X.-Y. Li, and H.~Ma, ``How to crowdsource tasks truthfully without
  sacrificing utility: Online incentive mechanisms with budget constraint,'' in
  \emph{INFOCOM}, 2014.

\bibitem{zhang2014free}
X.~Zhang, Z.~Yang, Z.~Zhou, H.~Cai, L.~Chen, and X.~Li, ``Free market of
  crowdsourcing: Incentive mechanism design for mobile sensing,'' \emph{IEEE
  transactions on parallel and distributed systems}, vol.~25, no.~12, pp.
  3190--3200, 2014.

\bibitem{chen2016incentivizing}
Y.~Chen, B.~Li, and Q.~Zhang, ``Incentivizing crowdsourcing systems with
  network effects,'' in \emph{INFOCOM}, 2016.

\bibitem{huang2019crowdsourcing}
C.~Huang, H.~Yu, J.~Huang, and R.~A. Berry, ``Crowdsourcing with heterogeneous
  workers in social networks,'' in \emph{IEEE GLOBECOM}, 2019.

\bibitem{zhang2016incentive}
H.~Zhang, B.~Liu, H.~Susanto, G.~Xue, and T.~Sun, ``Incentive mechanism for
  proximity-based mobile crowd service systems,'' in \emph{INFOCOM}, 2016.

\bibitem{szegedy2013intriguing}
C.~Szegedy, W.~Zaremba \emph{et~al.}, ``Intriguing properties of neural
  networks,'' \url{https://arxiv.org/abs/1312.6199}, 2013.

\bibitem{hunt2018ryoan}
T.~Hunt, Z.~Zhu, Y.~Xu, S.~Peter, and E.~Witchel, ``Ryoan: A distributed
  sandbox for untrusted computation on secret data,'' \emph{Proc. of TOCS},
  2018.

\bibitem{ahmad2018obliviate}
A.~Ahmad, K.~Kim, M.~I. Sarfaraz, and B.~Lee, ``Obliviate: A data oblivious
  filesystem for intel sgx.'' in \emph{Proc. of NDSS}, 2018.

\bibitem{kaptchuk2019giving}
G.~Kaptchuk, M.~Green, and I.~Miers, ``Giving state to the stateless:
  Augmenting trustworthy computation with ledgers,'' in \emph{Proc. of NDSS},
  2019.

\bibitem{douceur2002sybil}
J.~R. Douceur, ``The sybil attack,'' in \emph{Proc. of IPTPS}, 2002.

\bibitem{kuncheva2002switching}
L.~I. Kuncheva, ``Switching between selection and fusion in combining
  classifiers: An experiment,'' \emph{IEEE Transactions on Systems, Man, and
  Cybernetics, Part B (Cybernetics)}, vol.~32, no.~2, pp. 146--156, 2002.

\bibitem{li2014resolving}
Q.~Li, Y.~Li, J.~Gao, B.~Zhao, W.~Fan, and J.~Han, ``Resolving conflicts in
  heterogeneous data by truth discovery and source reliability estimation,'' in
  \emph{Proc. of ACM SIGMOD}, 2014.

\bibitem{su2014generalized}
L.~Su, Q.~Li, S.~Hu, S.~Wang, J.~Gao, H.~Liu, T.~F. Abdelzaher, J.~Han, X.~Liu,
  Y.~Gao \emph{et~al.}, ``Generalized decision aggregation in distributed
  sensing systems,'' in \emph{IEEE RTSS}, 2014.

\bibitem{salem2019ml}
A.~Salem, Y.~Zhang, M.~Humbert \emph{et~al.}, ``Ml-leaks: Model and data
  independent membership inference attacks and defenses on machine learning
  models,'' in \emph{Proc. of NDSS}, 2019.

\bibitem{russakovsky2015imagenet}
O.~Russakovsky, J.~Deng, H.~Su, J.~Krause \emph{et~al.}, ``Imagenet large scale
  visual recognition challenge,'' \emph{International journal of computer
  vision}, vol. 115, no.~3, pp. 211--252, 2015.

\bibitem{pang2019improving}
T.~Pang, K.~Xu, C.~Du, N.~Chen, and J.~Zhu, ``Improving adversarial robustness
  via promoting ensemble diversity,'' in \emph{ICML}, 2019.

\bibitem{liu2019deep}
L.~Liu, W.~Wei, K.-H. Chow, M.~Loper \emph{et~al.}, ``Deep neural network
  ensembles against deception: Ensemble diversity, accuracy and robustness,''
  in \emph{IEEE MASS}, 2019.

\end{thebibliography}

\end{document}